\newtheorem{theorem}{Theorem}
\newtheorem{proposition}[theorem]{Proposition}
\newtheorem{lemma}[theorem]{Lemma}
\newtheorem{corollary}[theorem]{Corollary}
\newtheorem{conjecture}[theorem]{Conjecture}
\theoremstyle{definition}
\newtheorem{remark}[theorem]{Remark}
\newcommand{\binH}{H}
\newcommand{\binrel}{D_2}
\newcommand{\var}{{\rm var}}
\newcommand{\ket}[1]{|#1\rangle}
\newcommand{\bra}[1]{\langle#1|}
\newcommand{\tr}[1]{{\rm{tr}}\left[#1\right]}
\newcommand{\1}{\mathds{1}}
\newcommand{\R}{\mathbb{R}}
\newcommand{\supp}{{\rm supp}}
\newcommand{\diag}{{\rm diag}}
\newcommand{\ii}{\mathds{1}}
\newcommand{\C}{\mathbb{C}}
\title{\vspace{-1.2cm}{\Large\textbf{Tight bound on relative entropy by entropy difference}}}
\author{\vspace{-0.15cm}David Reeb\thanks{reeb.qit@gmail.com}~}
\author{Michael M.~Wolf\thanks{m.wolf@tum.de}}
\affil{\vspace{-0.15cm}\small{Department of Mathematics, Technische Universit\"at M\"unchen, 85748 Garching, Germany}}
\date{}
\begin{document}

\maketitle

\vspace{-1cm}\begin{abstract}We prove a lower bound on the relative entropy between two finite-dimensional states in terms of their entropy difference and the dimension of the underlying space. The inequality is tight in the sense that equality can be attained for any prescribed value of the entropy difference, both for quantum and classical systems. We outline implications for information theory and thermodynamics, such as a necessary condition for a process to be close to thermodynamic reversibility, or an easily computable lower bound on the classical channel capacity. Furthermore, we derive a tight upper bound, uniform for all states of a given dimension, on the variance of the surprisal, whose thermodynamic meaning is that of heat capacity.
\end{abstract}

\medskip
\medskip

\tableofcontents

\newpage

\section{Introduction}\label{introsection}
The relative entropy is a distance-like measure that appears in a multitude of areas, such as information theory, thermodynamics, statistics and learning theory, being of operational significance in various situations (see Section \ref{applicationsection} for a few applications). Also known as the Kullback-Leibler divergence, it was first introduced for probability distributions \cite{kullbackleibler}, and later generalized to quantum states \cite{umegaki}. Another ubiquitous quantity is the entropy of a probability distribution or quantum state \cite{shannonmathemtheorycommun,vonNeumannFoundationsQM}, which in Thermodynamics had already played a central role because entropy differences characterize possible and impossible thermodynamic state transformations (see e.g.\ the Clausius inequality in Section \ref{stepwiseequlibrationsubsect}).

In this work, we provide a lower bound on the relative entropy $D(\sigma\|\rho)$ between two states $\sigma$, $\rho$ (probability distributions or quantum states) in terms of their entropy difference $\Delta=S(\sigma)-S(\rho)$. Qualitatively, it is clear that such non-trivial lower bounds exist in any finite dimension due to the compactness of the state space, since $\Delta\neq0$ implies $\sigma\neq\rho$ and thus $D(\sigma\|\rho)>0$ by Klein's inequality \cite{ohyapetz}. Our main inequality (Theorem \ref{maintheoremineqpaper}) makes this quantitative and is furthermore \emph{tight}, meaning that for each dimension $d$ it provides the best lower bound on $D(\sigma\|\rho)$ in terms of $\Delta$, both for classical and quantum systems. We note that any lower bound that can be derived by combining the tight Pinsker inequality \cite{pinskersinequality,refinementspinskier,audenaerteisertoptimal} with the tight Fannes-Audenaert inequality \cite{fannesinequalitypaper,audenaertfannes,fannesaudenaertelegant} will \emph{not} be tight and will be strictly weaker than the derived bounds, even in its functional dependence (see Remark \ref{pinskerfannesaudenaertremark}).

Also considering states of finite dimension $d$, in Section \ref{mainresults2} we give a tight upper bound on the variance of the \emph{surprisal} (or information gain), which is quadratic in $\log d$ (Section \ref{subsectionvarsurprisal}); of course, the expectation value of the surprisal is just the entropy and is bounded by $\log d$ \cite{ohyapetz}. One thermodynamic implication of this result is an upper bound on the heat capacity of finite-dimensional systems (Section \ref{maxheatcapacitysubsection}).

The main results of the present paper thus contribute new items to the set of dimension-dependent entropy bounds, of which the Fannes-Audenaert inequality is the single most well-known, and arguably most important, instance within information theory.

\bigskip

The inequalities presented here arose out of, and are used in, an investigation of finite-size effects in Landauer's Principle \cite{longlandauercompanionpaper}, but we expect them to have applications elsewhere in thermodynamics and information theory; some are outlined in Section \ref{applicationsection}. Furthermore, the finite-size bounds here arise in one-partite systems, whereas the Landauer scenario -- the topic of \cite{longlandauercompanionpaper} -- is bipartite, involving a system and a thermal reservoir \cite{landauer}.

Physically, our bounds are especially interesting for quantum thermodynamics \cite{gemmerbook,popescureachlimit} and generally for the thermodynamics of microscopic systems or devices. Furthermore, even a large heat bath may sometimes be reasonably treated as small, when the equilibration time with another system is so short that only a small part of the bath effectively interacts with the system.  Our bounds can be applied to derive finite-size corrections to well-known physical laws and, for example, alter efficiency analyses of physical process like Carnot's or Landauer's \cite{andersgiovannetti,longlandauercompanionpaper}.

By treating the Shannon and von Neumann (relative) entropies, our results are relevant to the conventional situation of many independent copies of a system state (``thermodynamic limit''), averaging quantities over these copies (``ensemble averages''). Thermodynamics and information theory can instead also be examined in the ``single-shot setting'', necessitating extra parameters such as the success probability of a process (e.g.\ \cite{rennerthesis,trulyworklike,oppenhoro,egloff}). Our setup  is thus different from the one-shot scenario:\ whereas the latter concerns a \emph{finite (small) number} of systems, our results have implications in the limit of infinitely many \emph{finite-dimensional} system copies. The variance computed in Section \ref{subsectionvarsurprisal}, however, can quantify how many copies of a finite-dimensional system have to be averaged before the Shannon or von Neumann entropies become sensible measures (see also \cite{tomamichaelvariance,kelisecondorder}).

\subsection{Notation}\label{notationsubsection}
All states $\rho$, $\sigma$ will be on a space of finite dimension $d<\infty$. In the quantum framework, states are positive semi-definite $d\times d$-matrices of trace $1$ (``density matrices'' \cite{nielsenchuang}). In the classical (probability theory) framework, they are probability distributions on $d$ atomic events \cite{coverthomas}. For a unified presentation of our results in both the classical and quantum setups, we will throughout identify such probability distributions with density matrices of size $d\times d$ that are diagonal w.r.t.\ a fixed basis and have the $d$ atomic probabilities $p_1,\ldots,p_d$ as diagonal entries; the notation $\rho={\rm diag}(p_1,\ldots,p_d)$ provides the translation between both domains. We often require $d\geq2$ to exclude the trivial one-dimensional case, in which some statements become pathological.

The \emph{entropy} of a state $\rho$ is defined as
\begin{align}\label{entropydefinition}
S(\rho)~:=~-\tr{\rho\log\rho}~.
\end{align}
Throughout, we use the natural logarithm, denoted by $\log$, and employ the usual rules of calculus on the extended real line $\overline{\R}:=\R\cup\{\pm\infty\}$, such as $0\log0:=0$; only in Section \ref{wrongcodesubsubsection} will we also use the $D$-ary logarithm $\log_Dx:=(\log x)/(\log D)$, with $D>1$. A quantity of central interest will be the \emph{entropy difference} $\Delta\equiv\Delta(\sigma,\rho)$ of the states $\sigma$ and $\rho$:
\begin{align}\label{definenotationDelta}
\Delta(\sigma,\rho)~:=~S(\sigma)-S(\rho)\,\in[-\log d,+\log d]~.
\end{align}

The other central quantity is the \emph{relative entropy} between two states $\sigma$ and $\rho$:
\begin{align}
D(\sigma\|\rho)~:=~\tr{\sigma\log\sigma}-\tr{\sigma\log\rho}~,
\end{align}
which equals $+\infty$ if $\supp[\sigma]\not\subseteq\supp[\rho]$, and is finite otherwise, non-negative, and vanishes iff $\sigma=\rho$.

We also define binary versions of the entropy and relative entropy, i.e.\ for binary probability distributions $(x,1-x)$ and $(y,1-y)$ with $0\leq x,y\leq1$:
\begin{align}
\binH(x)~&:=~S\left(\diag(x,1-x)\right)~&=&~x\log\frac{1}{x}+(1-x)\log\frac{1}{1-x}~,&\\
\binrel(x\|y)~&:=~D\left(\diag(x,1-x)\|\diag(y,1-y)\right)~&=&~x\log\frac{x}{y}+(1-x)\log\frac{1-x}{1-y}~.&\label{definebinaryrelent}
\end{align}

Note that the entropy difference $\Delta(\sigma,\rho)$ changes sign under exchange of $\sigma$ and $\rho$, whereas the relative entropy $D(\sigma\|\rho)$ does not generally have any symmetry under exchange. For example, $\Delta=-\log d$ forces $\rho$ to be the maximally mixed state $\1/d$ and $\sigma$ to be any pure state (any Hermitian projector of rank $1$), resulting in $D(\sigma\|\rho)=\log d$; whereas $\Delta=+\log d$ interchanges these $\rho$ and $\sigma$ and gives $D(\sigma\|\rho)=\infty$. The latter case is special as for any other $\Delta\in[-\log d,\log d)$ there exist full-rank states $\sigma$ and $\rho$ with $\Delta(\sigma,\rho)=\Delta$, such that $D(\sigma\|\rho)<\infty$ is finite.

For a more detailed discussion of entropic quantities we refer to \cite{ohyapetz} and \cite{wehrlreview} or, in the context of classical and quantum information theory, to \cite{coverthomas} and \cite{nielsenchuang}.

The acronyms LHS and RHS mean ``left-hand side'' and ``right-hand side'', respectively.

\section{Main results}\label{mainresultssection}

In Section \ref{relentvsentdifferencesubsection} we state the tight inequality between relative entropy and entropy difference (Theorem \ref{maintheoremineqpaper}) and describe properties and simplifications of the bound (Theorem \ref{propertiestheorem} and Remarks \ref{equalitycasesremark}--\ref{trivialdimindepbounds}) which are useful for applications (see Section \ref{applicationsection}). The tight upper bound on the variance of the surprisal (or heat capacity) is given in Section \ref{mainresults2}. The proofs follow in Section \ref{proofsection}.

\subsection{Relative entropy vs.\ entropy difference}\label{relentvsentdifferencesubsection}
To state our main inequality and its simplifications, we define for $d\geq2$ and $\Delta\in[-\log d,\log d]$:
\begin{align}
M(\Delta,d)~&:=~\min_{0\leq s,r\leq(d-1)/d}\,\left\{\binrel(s\|r)\,\big|\,\binH(s)-\binH(r)+(s-r)\log(d-1)=\Delta\,\right\}~,\label{definefunctionM}\\
N(d)~&:=~\max_{0<r<1/2}r(1-r)\,\left(\log\left(\frac{1-r}{r}(d-1)\right)\right)^2~,\label{definerealN}\\
N_d~&:=~\frac{1}{4}\log^2(d-1)+1~.\label{defineapproxN}
\end{align}
(The expression $\log^2(d-1)$ should always be read as $\left(\log(d-1)\right)^2$.) All of these quantities can be efficiently computed numerically as they involve optimizations over at most two bounded real variables. See also Fig.\ \ref{Mfigure}, and Lemmas \ref{lemmathatreducesoptimizationovertwovariables} and \ref{lemmareducingonevariableoptimization} (Section \ref{technicallemmassubsection}) for relations among (\ref{definefunctionM})--(\ref{defineapproxN}).

\bigskip

\begin{theorem}[Tight lower bound on relative entropy by entropy difference]\label{maintheoremineqpaper}
Let $\sigma$, $\rho$ be states of dimension $d$, with $2\leq d<\infty$, and define $\Delta:=S(\sigma)-S(\rho)$. Then:
\begin{align}
D(\sigma\|\rho)~\geq~M(\Delta,d)~,\label{ineqalityinmaintheorem}
\end{align}
with the function $M(\Delta,d)$ defined in Eq.\ (\ref{definefunctionM}).

Conversely, for any $\Delta\in[-\log d,\log d]$, there exist $\sigma$, $\rho$ attaining equality in (\ref{ineqalityinmaintheorem}). More precisely, for any pair $(s,r)$ attaining the minimum in (\ref{definefunctionM}), the commuting $d$-dimensional states
\begin{align}\label{formofsigmaandrhoinDvsDeltaLemma}
{\sigma}~:=~{\rm diag}\left(1-{s},\frac{{s}}{d-1},\ldots,\frac{{s}}{d-1}\right)~,\quad{\rho}~:=~{\rm diag}\left(1-{r},\frac{{r}}{d-1},\ldots,\frac{{r}}{d-1}\right)
\end{align}
have entropy difference $S(\sigma)-S(\rho)=\Delta$ and achieve equality $D(\sigma\|\rho)=M(\Delta,d)$.
\end{theorem}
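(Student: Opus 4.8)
The plan is to establish the inequality \eqref{ineqalityinmaintheorem} by a reduction argument that strips away everything irrelevant to the bound, leaving an optimization over a two-parameter family of classical states to which the definition of $M(\Delta,d)$ directly applies. First I would fix $\sigma$ and $\rho$ and argue that, without loss of generality, one may assume $[\sigma,\rho]=0$: indeed, replacing $\rho$ by the pinching $\mathcal{P}_\sigma(\rho)$ of $\rho$ along the eigenbasis of $\sigma$ can only decrease $S(\rho)$ is false in general, so instead the correct move is the standard one — $D(\sigma\|\rho)\geq D(\sigma\|\mathcal{P}_\sigma(\rho))$ by monotonicity of relative entropy under the pinching channel, while $S(\mathcal{P}_\sigma(\rho))\geq S(\rho)$, so $\Delta$ can only decrease. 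One must then check that the claimed bound $M(\Delta,d)$ is monotone in a way that makes this replacement safe (e.g.\ $M(\cdot,d)$ is decreasing on the relevant range, or one handles the sign of $\Delta$ explicitly); establishing the needed monotonicity/convexity properties of $M$ is presumably the content of the auxiliary Theorem \ref{propertiestheorem} and Lemmas \ref{lemmathatreducesoptimizationovertwovariables}--\ref{lemmareducingonevariableoptimization}, which I would invoke here. After this step both states are diagonal in a common basis, i.e.\ classical distributions $\sigma=\diag(p_1,\dots,p_d)$, $\rho=\diag(q_1,\dots,q_d)$.

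Next I would reduce the number of free parameters from $2d$ to $2$. The idea is that $D(\sigma\|\rho)$, $S(\sigma)$ and $S(\rho)$ are all unchanged under a simultaneous permutation of the indices, and more importantly that one can "merge" coordinates: given the constraint fixing $\Delta$ and the value of $D$, one shows the extremal configuration has at most two distinct values among the $q_i$ (and correspondingly among the $p_i$), with the block structure of \eqref{formofsigmaandrhoinDvsDeltaLemma}. Concretely, I would single out one coordinate (say the one carrying probability mass $1-s$ under $\sigma$ and $1-r$ under $\rho$) and show, by a convexity/Lagrange-multiplier or exchange argument applied to the remaining $d-1$ coordinates, that at the minimum the remaining mass $s$ (resp.\ $r$) is spread uniformly, since for fixed total mass and fixed contribution to $\Delta$ the relative-entropy term $\sum p_i\log(p_i/q_i)$ restricted to those coordinates is minimized at the uniform point. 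That collapses $D(\sigma\|\rho)$ to $\binrel(s\|r)$ and $\Delta$ to $\binH(s)-\binH(r)+(s-r)\log(d-1)$, with $0\le s,r\le (d-1)/d$ (the upper limit being where the "special" coordinate stops being the largest, which by the permutation symmetry is the only regime one needs). By definition of $M$ in \eqref{definefunctionM}, this is exactly $\geq M(\Delta,d)$, proving \eqref{ineqalityinmaintheorem}.

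The converse — tightness — is the easy direction: for a minimizing pair $(s,r)$ in \eqref{definefunctionM}, plug the explicit states \eqref{formofsigmaandrhoinDvsDeltaLemma} into the definitions. A direct computation gives $S(\sigma)=\binH(s)+s\log(d-1)$ and likewise for $\rho$, so $S(\sigma)-S(\rho)=\binH(s)-\binH(r)+(s-r)\log(d-1)=\Delta$ by the constraint in \eqref{definefunctionM}; and $D(\sigma\|\rho)=(1-s)\log\frac{1-s}{1-r}+s\log\frac{s/(d-1)}{r/(d-1)}=\binrel(s\|r)=M(\Delta,d)$. One should also note that the minimum in \eqref{definefunctionM} is attained (the feasible set is compact once one checks it is nonempty for every $\Delta\in[-\log d,\log d]$, which follows by a continuity/intermediate-value argument on the map $(s,r)\mapsto \binH(s)-\binH(r)+(s-r)\log(d-1)$, whose range is all of $[-\log d,\log d]$), so the statement "there exist $\sigma,\rho$ attaining equality" is justified.

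I expect the main obstacle to be the dimension-reduction step: rigorously proving that the extremal $\rho$ (and $\sigma$) has the two-block form \eqref{formofsigmaandrhoinDvsDeltaLemma} rather than some more complicated eigenvalue pattern. The delicate point is that one is minimizing $D$ subject to a \emph{fixed} value of $\Delta$, so a naive "uniformize the small eigenvalues" move must be shown to be compatible with keeping $\Delta$ constant — this is where the joint convexity properties of $\binrel$ and the concavity of $\binH$, together with a careful case analysis of which coordinate is largest, have to be combined. The pinching/monotonicity step and the tightness computation are comparatively routine.
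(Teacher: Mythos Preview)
Your reduction to commuting states via pinching has a gap as written. Pinching $\rho$ in the eigenbasis of $\sigma$ gives $D(\sigma\|\rho)\geq D(\sigma\|\mathcal{P}_\sigma(\rho))$ and $\Delta':=S(\sigma)-S(\mathcal{P}_\sigma(\rho))\leq\Delta$, but to conclude $D(\sigma\|\rho)\geq M(\Delta,d)$ from the classical bound $D(\sigma\|\mathcal{P}_\sigma(\rho))\geq M(\Delta',d)$ you would need $M(\Delta',d)\geq M(\Delta,d)$, i.e.\ $M$ non-increasing --- which fails for $\Delta>0$. One can salvage this by pinching $\sigma$ in $\rho$'s basis when $\Delta>0$ and $\rho$ in $\sigma$'s basis when $\Delta<0$, using the two-sided monotonicity of $M$ (Remark~\ref{Mincreasingremark}); but the paper's proof of that monotonicity already uses the equality case of Theorem~\ref{maintheoremineqpaper}, so you would need an independent argument to avoid circularity. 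The paper sidesteps all of this: for fixed $\rho$ and fixed \emph{spectrum} of $\sigma$ (hence fixed $S(\sigma)$ and fixed $\Delta$), the term $\tr{(-\log\rho)\sigma}$ in $D(\sigma\|\rho)=-S(\sigma)+\tr{(-\log\rho)\sigma}$ is minimized when $\sigma$ is codiagonal with $\rho$ and similarly ordered. No property of $M$ is needed, and $\Delta$ is preserved exactly.

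The real gap is your dimension-reduction step. ``Single out one coordinate and uniformize the rest'' is not justified under the fixed-$\Delta$ constraint, and you give no mechanism for why the minimizer has only two distinct eigenvalues. The paper's argument is considerably more specific and is the missing idea: first, by the thermodynamic (Klein) inequality, at the optimum one has $\widehat\sigma=\widehat\rho^{\,\gamma}/\tr{\widehat\rho^{\,\gamma}}$ for some $\gamma\geq0$; then, fixing $\widehat\sigma$ and running Lagrange multipliers on the minimization over $\rho$ (with normalization and entropy constraints), one obtains $\widehat q_j/\widehat p_j=\text{(affine function of }\log\widehat p_j)$, which together with $\widehat q_j=\widehat p_j^{\,\gamma}/Z$ forces at most two distinct nonzero values of $\widehat p_j$ (an affine curve meets a strictly convex or concave curve in at most two points). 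A further nontrivial step shows that among two-block states with multiplicities $(m,n)$ the choice $(m,n)=(1,d-1)$ is strictly best, and the degenerate Lagrange case (linearly dependent constraint gradients, i.e.\ $\widehat\rho$ maximally mixed on its support) must be ruled out separately. Your tightness computation and the compactness/attainment remarks are fine and match the paper.
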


\smallskip

\begin{theorem}[Properties of the tight bound]\label{propertiestheorem}
Let $2\leq d<\infty$. Then the function $M(\Delta,d)$ in the tight lower bound (\ref{ineqalityinmaintheorem}) is non-negative, continuous and strictly convex in $\Delta\in[-\log d,\log d]$, and continuously differentiable in the interior of this interval. It takes values $M(0,d)=0$, $M(-\log d,d)=\log d$, $M(\log d,d)=\infty$, and $M(\Delta,d)<\infty$ for $\Delta\in[-\log d,\log d)$.

For any $N\geq N(d)$, with $N(d)$ from Eq.\ (\ref{definerealN}), the following lower bounds hold for all $\Delta$:
\begin{align}
M(\Delta,d)~&\geq~N\left(e^\frac{\Delta}{N}-1-\frac{\Delta}{N}\right)~\geq~\frac{\Delta^2}{2N}+\frac{\Delta^3}{6N^2}~,\label{lowerboundeqninmaintheorem}\\
M(\Delta,d)~&\geq~\frac{\Delta^2}{3\log^2d}~.\label{quadraticlowerboundforallDelta}
\end{align}
Easily computable choices for $N$ are $N=N_d=\frac{1}{4}\log^2(d-1)+1>N(d)$, or $N=\log^2d>N(d)$.
\end{theorem}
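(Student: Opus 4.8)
The plan is to first establish the structural properties of $M(\Delta,d)$ by analyzing the optimization problem (\ref{definefunctionM}), and then derive the explicit lower bounds (\ref{lowerboundeqninmaintheorem}) and (\ref{quadraticlowerboundforallDelta}) from convexity together with information about the behavior of $M$ near $\Delta=0$. For the first part, I would reparametrize: the constraint $\binH(s)-\binH(r)+(s-r)\log(d-1)=\Delta$ says precisely that the states $\sigma$, $\rho$ of the form (\ref{formofsigmaandrhoinDvsDeltaLemma}) have entropy difference $\Delta$ (a direct computation, since $S$ of such a state is $\binH(s)+s\log(d-1)$), and $\binrel(s\|r)$ is their relative entropy. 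Continuity of $M$ on the compact parameter domain follows from continuity of the objective and a closedness argument for the constraint set (with care at the boundary where relative entropy may be $+\infty$, handled exactly as in the $\Delta=+\log d$ discussion in the Notation section). The special values $M(0,d)=0$, $M(-\log d,d)=\log d$, $M(\log d,d)=\infty$ come from identifying the extremal states: $\Delta=0$ admits $s=r$; $\Delta=-\log d$ forces $r$ to correspond to the maximally mixed state and $s$ to a pure state; $\Delta=+\log d$ forces the reverse, giving a support mismatch. Finiteness for $\Delta<\log d$ uses the last paragraph of the Notation section.

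The heart of the first part is \textbf{strict convexity} of $\Delta\mapsto M(\Delta,d)$, and continuous differentiability on the open interval. Here I would use the convexity of relative entropy jointly in both arguments, combined with the fact that the constraint is affine in the pair of "entropy-type" coordinates but not in $(s,r)$ directly; the cleanest route is probably to pass to a convex parametrization of the feasible set of state-pairs (e.g.\ via the joint convexity of $D(\sigma\|\rho)$ over all commuting pairs interpolating between feasible endpoints, noting that $\Delta$ is \emph{linear} in $(S(\sigma),S(\rho))$ and hence the sublevel sets combine correctly under mixing of the pairs at the level of their spectra). One must check that mixing two feasible pairs with entropy differences $\Delta_1\neq\Delta_2$ produces, after the requisite spectral rearrangement, a feasible pair whose entropy difference lies strictly between and whose relative entropy is strictly below the corresponding convex combination — this is where strictness enters, via strict convexity of $D$ off the diagonal. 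Differentiability on the interior then follows from strict convexity plus continuity (a convex function that is not differentiable at an interior point has a corner, which a short argument using the explicit first-order optimality conditions for (\ref{definefunctionM}) rules out); alternatively one invokes Lemmas \ref{lemmathatreducesoptimizationovertwovariables} and \ref{lemmareducingonevariableoptimization} to reduce to a one-variable problem with a smooth implicitly-defined solution.

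For the explicit bounds, the key observation is that the function $g_N(\Delta):=N(e^{\Delta/N}-1-\Delta/N)$ is convex, vanishes with vanishing derivative at $\Delta=0$, and has second derivative $e^{\Delta/N}/N$, which at $\Delta=0$ equals $1/N\leq 1/N(d)$. I would show $M(\Delta,d)\geq g_N(\Delta)$ by a comparison-of-convex-functions argument: both sides are convex, vanish to first order at $0$, so it suffices that $M''(0,d)\geq g_N''(0)=1/N$ together with a global tangency/monotonicity control — concretely, I expect that evaluating the optimization (\ref{definefunctionM}) for $s,r$ near $r$ small gives exactly $M''(0,d)=1/N(d)$, i.e.\ $N(d)$ is the reciprocal curvature of $M$ at the origin, which also explains the form of definition (\ref{definerealN}) as the maximizing second-order term. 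The chain $g_N(\Delta)\geq \Delta^2/(2N)+\Delta^3/(6N^2)$ is the truncated Taylor series with nonnegative remainder (for the relevant sign of $\Delta$; for $\Delta<0$ one checks the cubic term only helps), and (\ref{quadraticlowerboundforallDelta}) follows by choosing $N=\log^2 d$ and bounding $e^{\Delta/N}-1-\Delta/N$ below by a quadratic uniformly on $|\Delta|\leq\log d$, i.e.\ on $|\Delta/N|\leq 1/\log d\leq 1/\log 2$, using $e^x-1-x\geq x^2/3$ on a suitable interval; the constant $3$ is chosen to absorb the worst case at $\Delta=\pm\log d$ including the infinite value at $+\log d$. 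Finally, $N_d=\frac14\log^2(d-1)+1>N(d)$ and $\log^2 d>N(d)$ are elementary estimates on the maximization (\ref{definerealN}): bound $r(1-r)\leq 1/4$ and control the logarithmic factor, with the "$+1$" absorbing small-$d$ corrections.

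\textbf{Main obstacle.} The delicate point is the strict convexity of $M(\Delta,d)$ in $\Delta$: the feasible set in (\ref{definefunctionM}) is not convex in $(s,r)$, so one cannot simply invoke joint convexity of $D_2(s\|r)$ directly. The right move is to convexify at the level of spectra of the state-pairs (\ref{formofsigmaandrhoinDvsDeltaLemma}) — mixing density matrices and re-sorting eigenvalues — and to verify that this operation respects the affine constraint on $\Delta$ while strictly decreasing relative entropy for distinct feasible pairs; establishing that the re-sorting does not destroy feasibility (i.e.\ keeps the pair within the two-parameter family up to the reduction afforded by Lemma \ref{lemmathatreducesoptimizationovertwovariables}) is the technical crux.
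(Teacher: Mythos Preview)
Your proposal contains two genuine gaps.

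\textbf{Strict convexity.} Your mixing argument cannot work as stated. If $(\sigma_i,\rho_i)$ are optimal pairs for $\Delta_i$ ($i=1,2$), joint convexity gives $D(\sigma_\lambda\|\rho_\lambda)\leq\lambda M(\Delta_1,d)+(1-\lambda)M(\Delta_2,d)$, but the entropy difference $\Delta_\lambda:=S(\sigma_\lambda)-S(\rho_\lambda)$ is \emph{not} $\lambda\Delta_1+(1-\lambda)\Delta_2$: both $S(\sigma_\lambda)$ and $S(\rho_\lambda)$ acquire strict-concavity corrections, and their \emph{difference} has no definite sign. So you cannot place $\Delta_\lambda$ at the required abscissa, and ``re-sorting eigenvalues'' does not help --- the family (\ref{formofsigmaandrhoinDvsDeltaLemma}) is already closed under convex combinations, so there is nothing to re-sort. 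You correctly flag this as the main obstacle but do not resolve it. The paper's route is entirely different: it applies Lagrange multipliers to (\ref{definefunctionM}), extracts a smooth curve $(s(x),r(x))$ of optimizers parametrized by $x=\Delta$, computes $d^2M/dx^2$ explicitly along this curve, and proves positivity by establishing $\dot s\geq0$ and $\dot r\leq0$ from a sign analysis of the stationarity condition and its partial derivatives. Continuous differentiability also comes from this explicit parametrization (yielding a closed formula for $dM/dx$), not from an abstract convexity argument; your claim that strict convexity plus continuity rules out corners is false in general.

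\textbf{The lower bound (\ref{lowerboundeqninmaintheorem}).} Your comparison-of-convex-functions idea --- match value, first, and second derivative at $\Delta=0$ --- is insufficient: two convex functions agreeing to second order at a point can still cross, and indeed Remark \ref{taylorexpansionremark} notes that the pure quadratic $\Delta^2/2N(d)$ \emph{fails} as a lower bound for small $\Delta<0$. The ``global tangency/monotonicity control'' you allude to is the entire content, and you do not supply it. The paper does not use convexity of $M$ here at all: Lemma \ref{lemmathatreducesoptimizationovertwovariables} proves the \emph{pointwise} inequality $\binrel(s\|r)\geq N(d)\bigl(e^{\Delta(s,r)/N(d)}-1-\Delta(s,r)/N(d)\bigr)$ for every admissible $(s,r)$, by fixing $r$ and showing the difference is convex in $s$ --- its second $s$-derivative is nonnegative \emph{precisely} because $N(d)$ dominates the variance expression in (\ref{definerealN}) --- and vanishes with zero derivative at $s=r$. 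Extension to $N\geq N(d)$ is by monotonicity of the bound in $N$; the second inequality in (\ref{lowerboundeqninmaintheorem}) is a Taylor-remainder check; and (\ref{quadraticlowerboundforallDelta}) follows from (\ref{lowerboundeqninmaintheorem}) with $N=\log^2 d$ by direct algebra, with a separate numerical verification for $d=2$ on part of the range.
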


\medskip

\begin{remark}[Equality cases in Eq.\ (\ref{ineqalityinmaintheorem})]\label{equalitycasesremark}
Regarding the equality statement in Theorem \ref{maintheoremineqpaper}, we remark that for any $\Delta\in[-\log d,\log d]$ the minimum in (\ref{definefunctionM}) actually exists, i.e.\ is attained for some pair $(s,r)$ (see Section \ref{proofofpropertiessubsect}), and equals $\infty\in\overline{\R}$ for $\Delta=\log d$. Note that for states of the form (\ref{formofsigmaandrhoinDvsDeltaLemma}), it is $D(\sigma\|\rho)=\binrel(s\|r)$, $S(\sigma)=\binH(s)+s\log(d-1)$ and similar for $S(\rho)$. In Remark \ref{exponentialfamilyremark} we elaborate on the states (\ref{formofsigmaandrhoinDvsDeltaLemma}), which come all from the same exponential family.

For $\Delta\neq0$, the pair $(\sigma,\rho)$ from (\ref{formofsigmaandrhoinDvsDeltaLemma}) constitutes, up to simultaneous unitary equivalence, the unique $d$-dimensional states achieving equality $D(\sigma\|\rho)=M(\Delta,d)$ and $S(\sigma)-S(\rho)=\Delta$. This follows from the proof of Theorem \ref{maintheoremineqpaper} in Section \ref{proofofmainthmsubsect} as the optimal states for $\Delta\neq0$ are necessarily of the form (\ref{formofsigmaandrhoinDvsDeltaLemma}) with $0\leq s,r\leq(d-1)/d$, and since for $\Delta\neq0$ the pair $(s,r)$ attaining the minimum in (\ref{definefunctionM}) is unique (which is shown in our proof of the convexity of $M(\Delta,d)$ in Section \ref{proofofpropertiessubsect}). For $\Delta=0$, exactly the pairs with $\sigma=\rho$ attain equality in (\ref{ineqalityinmaintheorem}).

As inequality (\ref{ineqalityinmaintheorem}) is tight for \emph{commuting} density matrices, it is tight for classical probability distributions (diagonal density matrices) as well.
\end{remark}

\begin{figure}[t]
\includegraphics[scale=0.85]{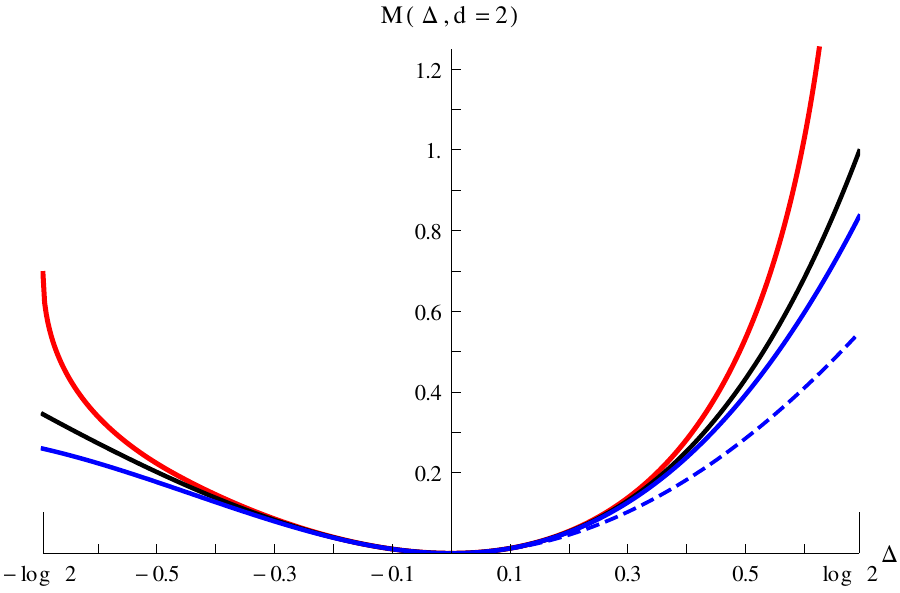}~~~~~~~\includegraphics[scale=0.85]{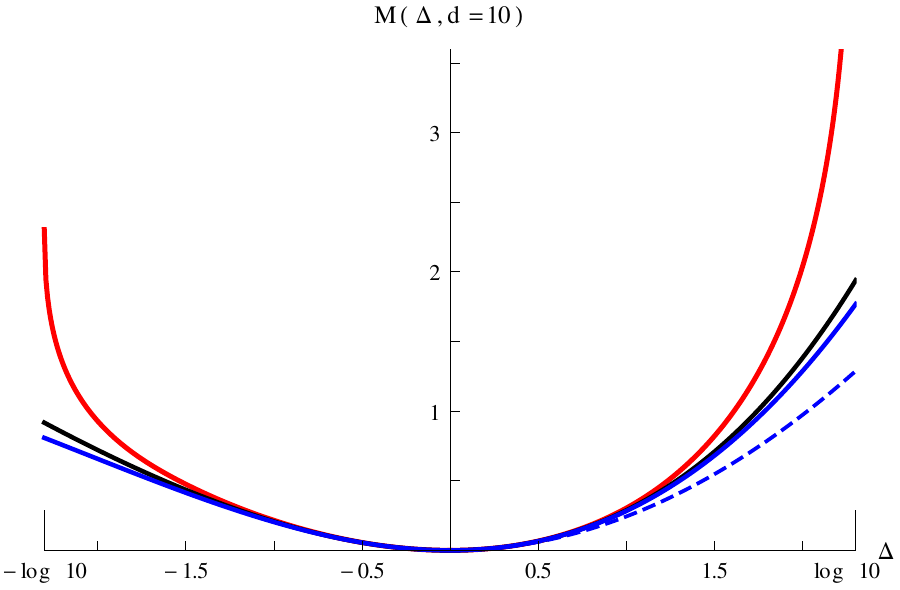}\\~\\
\includegraphics[scale=0.85]{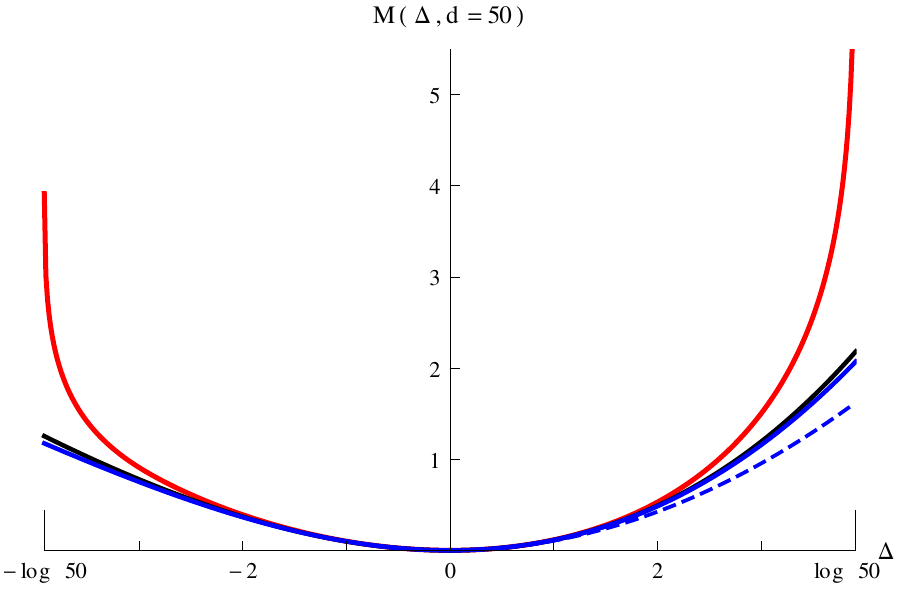}~~~~~~~\includegraphics[scale=0.85]{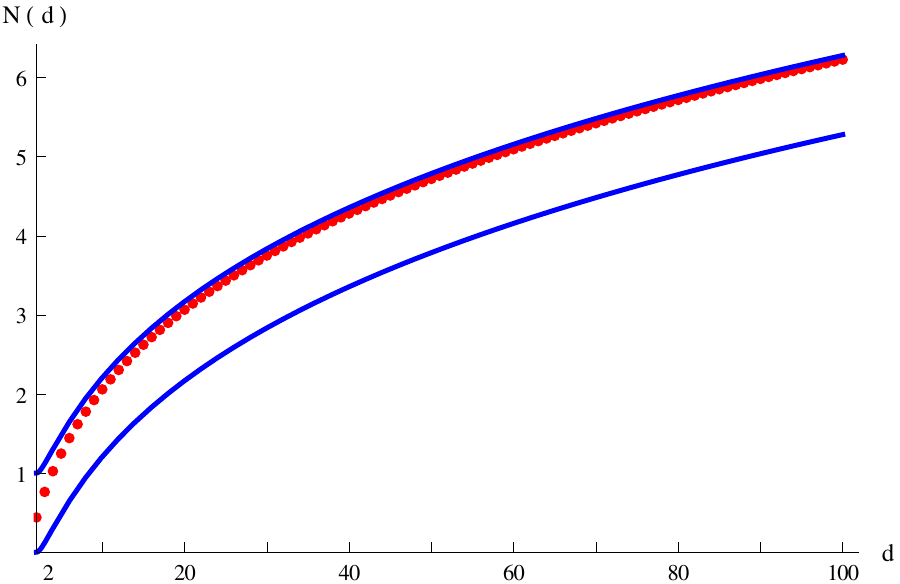}
\caption{\label{Mfigure}\emph{Upper and left panels:}\ The upper (red) curves show $M(\Delta,d)$ from Eq.\ (\ref{definefunctionM}) (tight lower bound of Theorem \ref{maintheoremineqpaper}) for $d=2,10,50$. The black and blue solid curves below are the lower bounds from Eq.\ (\ref{lowerboundeqninmaintheorem}) with the optimal $N=N(d)$, the dotted blue curve is the quadratic lower bound $\Delta^2/2N(d)$ (for $\Delta\geq0$). At $\Delta=\pm\log d$, all these lower bounds approach $2$ in the limit $d\to\infty$, whereas $M(-\log d,d)=\log d$ and $M(\log d,d)=\infty$. \emph{Lower right panel:}\ The red dots show $N(d)$ for $2\leq d\leq100$ (Eq.\ (\ref{definerealN})), which approaches its easily computable upper bound $N_d$ (Eq.\ (\ref{defineapproxN}) and Remark \ref{heatcapacityboundistight}) as $d\to\infty$ and which is bounded from below by $N_d-1$ (blue curves).}
\end{figure}

\begin{remark}[Goodness of the lower bounds in Eq.\ (\ref{lowerboundeqninmaintheorem})]\label{taylorexpansionremark}
One can check that the function $M(\Delta,d)$ is smooth around $\Delta=0$ (see Section \ref{proofofpropertiessubsect}) and that the RHS of (\ref{lowerboundeqninmaintheorem}) with $N=N(d)$ is its cubic Taylor expansion. This is thus the best cubic lower bound possible, and $\Delta^2/2N(d)$ is the best quadratic lower bound for $\Delta\geq0$ (Fig.\ \ref{Mfigure}); it is however not a lower bound for small $\Delta<0$.

The lower bounds in (\ref{lowerboundeqninmaintheorem}) are quite good (cf.\ Fig.\ \ref{Mfigure}) even for relatively large $|\Delta|$. For any constant $t\in(-1,1)$, the states (\ref{formofsigmaandrhoinDvsDeltaLemma}) with $s=(1+t)/2$, $r=(1-t)/2$ give $\Delta=\Delta(\sigma,\rho)=t\log(d-1)$ and $M(\Delta,d)\leq D(\sigma\|\rho)=\binrel(s\|r)=t\log(1+t)/(1-t)$, whereas the lowest bound in (\ref{lowerboundeqninmaintheorem}) gives $\simeq2t^2$ (the cubic term vanishes as $\sim1/\log d$). Even for the large values $\Delta=\pm\frac{1}{2}\log d$, the lower bound is thus tight (for large $d$) up to at most $10\%$.

The quantity $N(d)$ from Theorem \ref{propertiestheorem} appears in the upper bound in Theorem \ref{maxvarianceinfothm} as well, see Remark \ref{heatcapacityboundistight}. For this quantity, see also the lower right panel in Fig.\ \ref{Mfigure}.
\end{remark}

\begin{remark}[Monotonicity of $M(\Delta,d)$ in $\Delta$]\label{Mincreasingremark}
The tight lower bound $M(\Delta,d)$ is strictly monotonically decreasing in $\Delta$ in the regime $\Delta\leq0$, and strictly increasing in $\Delta$ in the regime $\Delta\geq0$ (cf.\ Fig.\ \ref{Mfigure}). This follows since the non-negative function $M(\Delta,d)$ vanishes at $\Delta=0$ and is strictly convex by Theorem \ref{propertiestheorem} (see also Fig.\ \ref{Mfigure}). As our convexity proof in Section \ref{proofofpropertiessubsect} is quite involved, we give now a simpler proof of monotonicity. We actually prove
\begin{align}\label{verystrictmonotonicityeqn}
M(\lambda\Delta,d)~<~\lambda\,M(\Delta,d)\qquad\text{for}~\Delta\in[-\log d,\log d]\setminus\{0\}\,,~\lambda\in(0,1)~.
\end{align}

Let first $\Delta\in(0,\log d)$, $\lambda\in(0,1)$, and let $\sigma$, $\rho$ be states with $D(\sigma\|\rho)=M(\Delta,d)$ and $S(\sigma)-S(\rho)=\Delta$. Define states $\sigma_\mu:=\mu\sigma+(1-\mu)\rho$ for $\mu\in[0,1]$. As $S(\sigma_\mu)$ is continuous in $\mu$, there exists $\mu'\in(0,1)$ with $S(\sigma_{\mu'})-S(\rho)=\lambda\Delta$, and by strict concavity of the entropy we have
\begin{align}
\lambda\Delta~>~\mu'S(\sigma)+(1-\mu')S(\rho)-S(\rho)~=~\mu'\Delta~,
\end{align}
i.e.~$\mu'<\lambda$. Convexity of the relative entropy \cite{ohyapetz} finally gives
\begin{align}
M(\lambda\Delta,d)~\leq~D(\sigma_{\mu'}\|\rho)~\leq~\mu'D(\sigma\|\rho)+(1-\mu')D(\rho\|\rho)~<~\lambda M(\Delta,d)~.
\end{align}
(\ref{verystrictmonotonicityeqn}) holds for $\Delta=\log d$ as well, since $M(\lambda\Delta,d)<\infty$ due to $\lambda\Delta<\log d$. The proof for $\Delta<0$ is similar, now replacing $\rho$ by some state $\rho_{\mu'}=\mu'\rho+(1-\mu')\sigma$ (see also the end of the proof of Theorem \ref{maintheoremineqpaper} in Section \ref{proofofmainthmsubsect} for the case $\Delta=-\log d$).
\end{remark}

\begin{remark}[Lower bounds from the Fannes-Audenaert and Pinsker inequalities]\label{pinskerfannesaudenaertremark}
A weaker lower bound on the relative entropy $D(\sigma\|\rho)$ in terms of the entropy difference $\Delta=\Delta(\sigma,\rho)$, as in Theorem \ref{maintheoremineqpaper}, can be obtained by combining the Fannes-Audenaert \cite{fannesinequalitypaper,audenaertfannes} and Pinsker \cite{pinskersinequality} inequalities:\ writing $T:=\|\sigma-\rho\|_1/2$ for the trace distance (or total variation or statistical distance \cite{coverthomas}) between the states $\sigma$ and $\rho$, we have the bound \cite{fannesinequalitypaper,audenaertfannes,fannesaudenaertelegant}
\begin{align}\label{fannesaudenaertineq}
|\Delta|~=~|S(\sigma)-S(\rho)|~\leq~T\log(d-1)+\binH(T)~=:~h_d(T)~\leq~T\left(1+\log(d-1)+\log1/T\right)~,
\end{align}
the first inequality being tight, and the sharpened Pinsker bound \cite{pinskersinequality,coverthomas,HOT81,refinementspinskier,audenaerteisertoptimal}
\begin{align}\label{pinskerinequality}
D(\sigma\|\rho)~\geq~s(T)~\geq~2T^2~,
\end{align}
where $s:[0,1]\to[0,\infty]$ is a function \cite{refinementspinskier} such that the first inequality is tight (for any dimension $d\geq2$) and which is bounded from below by its quadratic Taylor expansion, $s(x)\geq2x^2$.

If now $\Delta\in[-\log d,\log d]$ is given, we can invert the function $\left.h_d\right|_{[0,(d-1)/d]}:[0,(d-1)/d]\to[0,\log d]$ from (\ref{fannesaudenaertineq}), or bound the inversion of its RHS from below, to get a lower bound on $T$:
\begin{align}\label{invertfannesaudenaert}
T~\geq~h_d^{-1}(|\Delta|)~\geq~\frac{e-1}{e}\,\frac{|\Delta|}{1+\log(d-1)-\log|\Delta|}~,
\end{align}
where the prefactor is $(e-1)/e\approx0.63$. Plugging either of this into (\ref{pinskerinequality}) yields a lower bound on $D(\sigma\|\rho)$. This approach, however, can never yield a quadratic lower bound $\sim\Delta^2$ near $\Delta=0$, as (\ref{ineqalityinmaintheorem}) and (\ref{lowerboundeqninmaintheorem})--(\ref{quadraticlowerboundforallDelta}) together do, since the tight lower bound $s(T)$ in (\ref{pinskerinequality}) is quadratic near $T=0$ and since $h_d$ from (\ref{fannesaudenaertineq}) does \emph{not} satisfy $h_d^{-1}(|\Delta|)\geq c(d)|\Delta|$ for any positive $d$-dependent constant $c(d)$. Numerically, one actually sees that, for all $d\geq2$ and $\Delta\neq0$, the lower bound obtained by plugging the RHS of (\ref{invertfannesaudenaert}) into the RHS of (\ref{pinskerinequality}) is worse than the RHS of (\ref{lowerboundeqninmaintheorem}) with $N=N(d)$ (and even worse than the quadratic lower bound $\Delta^2/2N(d)$ for $\Delta>0$).

Furthermore, this approach can only ever yield lower bounds that are invariant under $\Delta\mapsto-\Delta$ since the Fannes-Audenaert and Pinsker inequalities are both symmetric in $\sigma$ and $\rho$. The tight lower bound $M(\Delta,d)$ however does actually not have this invariance (see Fig.\ \ref{Mfigure}).
\end{remark}

\begin{remark}[Dimension-independent bounds are trivial]\label{trivialdimindepbounds}The non-trivial lower bounds (i.e., which are strictly positive for $\Delta\neq0$) on the relative entropy from Theorems \ref{maintheoremineqpaper} and \ref{propertiestheorem} depend explicitly on the dimension $d<\infty$. This has to be so as any dimension-independent bound will necessarily be trivial:\ setting $t:=\Delta/\log(d-1)$ in the states of Remark \ref{taylorexpansionremark}, with any constant $\Delta\in(-\infty,+\infty)$ and for large enough dimension $d$, gives $\Delta(\sigma,\rho)=\Delta$ and $D(\sigma\|\rho)=O\left(2\Delta^2/\log^2(d-1)\right)\to0$ as $d\to\infty$, so that $0$ is the best possible dimension-independent lower bound for any fixed value of $\Delta$; this also holds for states over infinite-dimensional Hilbert spaces. In this case, however, the lower bound $0$ is never attained for $\Delta\neq0$ as $D(\sigma\|\rho)=0$ would imply $\sigma=\rho$ \cite{ohyapetz,brattelirobinson}, and thus $\Delta=0$ (if the entropies $S(\sigma)$, $S(\rho)$ are at all defined).

We further remark that the optimal lower bound $M(\Delta,d)$ is a decreasing function of $d$, implying that the finite-size corrections in applications (see Section \ref{applicationsection}) will be smaller for larger systems. To see this, let $d'>d\geq2$, $\Delta\in[-\log d,\log d]$, and let $s,r$ be optimal variables when computing $M(\Delta,d)$ in (\ref{definefunctionM}). Now define $s':=s$, and find $r'$ such that the entropy difference $\Delta(\sigma',\rho')$ between $d'$-dimensional states $\sigma'$, $\rho'$ as in (\ref{formofsigmaandrhoinDvsDeltaLemma}) equals the given $\Delta$; if $\Delta\neq0$, $r'$ will be closer to $s'=s$ than $r$ is to $s$, such that $M(\Delta,d')\leq \binrel(s'\|r')\leq \binrel(s\|r)=M(\Delta,d)$ with strict inequality for $\Delta\neq0$.
\end{remark}

\bigskip

The main part of the proofs of Theorems \ref{maintheoremineqpaper} and \ref{propertiestheorem} consists in reducing the minimization of $D(\sigma\|\rho)$ over (quantum) states $\sigma$, $\rho$ with a fixed value of $\Delta(\sigma,\rho)=\Delta$ to the simpler minimization over two bounded real variables in (\ref{definefunctionM}). The first step in this reduction is a simple argument that the bound (\ref{ineqalityinmaintheorem}) for quantum states follows from the corresponding bound for classical probability distributions, i.e.\ for all states $\sigma$, $\rho$ that are both diagonal w.r.t.\ a fixed basis. We give the full proofs in Sections \ref{proofofmainthmsubsect}--\ref{technicallemmassubsection}.

\subsection{Dimension bounds on second moments}\label{mainresults2}
In Section \ref{subsectionvarsurprisal} we derive a tight upper bound on the variance of the surprisal in terms of the dimension of the underlying space. Translating to thermodynamics in Section \ref{maxheatcapacitysubsection}, this yields an upper bound on the second moment of the energy of thermal states or, equivalently, on the heat capacity of finite-dimensional systems.

The derived bounds have apparent connections to the relative entropy inequalities from Theorems \ref{maintheoremineqpaper} and \ref{propertiestheorem}. Namely, the optimal states are of the same form and the (optimal) bounds involve the same quantities (see Remarks \ref{heatcapacityboundistight} and \ref{exponentialfamilyremark}). Also, all the bounds are dimension-dependent and become trivial for infinite-dimensional spaces (cf.\ Remark \ref{trivialdimindepbounds}). Furthermore, the heat capacity bound of Corollary \ref{maxheatcapacity} is in fact used in \cite{longlandauercompanionpaper} in a bipartite situation to bound a relative entropy term from below in an indirect way, as the direct bound by Theorem \ref{maintheoremineqpaper} would necessarily depend on an undesired entropic quantity (i.e.\ one from the ``wrong'' subsystem).

\subsubsection{Maximum variance of the surprisal}\label{subsectionvarsurprisal}
In a classical random experiment described by a probability distribution $\rho=\diag(p_1,p_2,\ldots,p_d)$, the \emph{information gain} upon outcome $i$ is $(-\log p_i)$, which is the unique sensible information measure in the limit of many independent experiments \cite{shannonmathemtheorycommun,coverthomas}. Equivalently, the surprise about obtaining $i$ may be quantified by the \emph{surprisal} $(-\log p_i)$. The (Shannon) entropy (\ref{entropydefinition}) is the expectation value of the surprisal, $S(\rho)=\sum_ip_i(-\log p_i)=\langle-\log\rho\rangle_\rho$. In this section, we look at its second moment, i.e.\ the \emph{variance} or fluctuation of the surprisal:
\begin{align}\label{expressvariationofinformationwithpi}
\var_\rho(-\log\rho)~:=~\sum_ip_i(-\log p_i)^2-\left(\sum_ip_i(-\log p_i)\right)^2~=~\tr{\rho\left(-\log\rho-S(\rho)\right)^2}.
\end{align}

In classical coding theory, when the source signals are i.i.d.\ distributed according to the spectrum of $\rho$, optimal prefix codes assign a codeword length of roughly $\simeq(-\log p_i)$ to symbol $i$ \cite{coverthomas}. The expected codeword length is thus $\simeq S(\rho)$ with fluctuation $\simeq\sqrt{\var_\rho(-\log\rho)}$, which implies a certain fluctuation in the lengths of encoded messages. (This holds up to an overall factor logarithmic in the size of the code alphabet, see Section \ref{wrongcodesubsubsection}.) Similar second-order effects in hypothesis testing using only finitely many copies have recently been investigated in \cite{tomamichaelvariance,kelisecondorder}.

The above definitions in terms of a general density matrix $\rho$ are sensible in the quantum framework as well, and have similar interpretations \cite{schumachercoding,nielsenchuang,indeterminatelengthQcoding}. Note that $S(\rho)$ and $\var_\rho(-\log\rho)$ depend both only on the eigenvalues of the density matrix $\rho$.

Our main theorem here places a tight upper bound on the variance of the surprisal, only in terms of the dimension $d$ of the system. A non-tight upper bound is implicit in \cite{polyanskiyetal}, where the term $\sum_ip_i\log^2p_i$ in (\ref{expressvariationofinformationwithpi}) has been bounded. For the expectation value of the surprisal, i.e.\ the entropy, a tight upper bound is of course well-known:\ $S(\rho)\leq\log d$.
\begin{theorem}[Maximum variance of the surprisal]\label{maxvarianceinfothm}
Let $\rho$ be a state on a $d$-dimensional system. Then, for $d\geq2$,
\begin{align}\label{firstineqformaxvarianceinfo}
\var_\rho(-\log\rho)~\leq~N(d)~<~N_d~.
\end{align}
(See definitions (\ref{definerealN}) and (\ref{defineapproxN}) for $N(d)$ and $N_d$, and cf.\ Lemma \ref{lemmareducingonevariableoptimization}.) For $d=1$, $\var_\rho(-\log\rho)=0$.

For $d\geq2$, let $r=r_d$ be the (unique) parameter attaining the maximum in the definition of $N(d)$ (Eq.\ (\ref{definerealN})). Then equality $\var_\rho(-\log\rho)=N(d)$ is achieved if and only if $\rho$ has spectrum
\begin{align}\label{formofrhowidehatinvariancethm}
{\rm spec}[\rho]~=~\left(1-r_d,\frac{r_d}{d-1},\ldots,\frac{r_d}{d-1}\right)~.
\end{align}
\end{theorem}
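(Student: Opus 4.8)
The plan is to reduce the optimization of $\var_\rho(-\log\rho)$ over all $d$-dimensional states to an optimization over the eigenvalue vector $p=(p_1,\dots,p_d)$ in the probability simplex, since the variance depends only on $\mathrm{spec}[\rho]$, and then to characterize the maximizers by a Lagrange/KKT analysis. First I would write $f(p):=\sum_i p_i\log^2 p_i-\bigl(\sum_i p_i\log p_i\bigr)^2$ and note that $f$ is continuous on the compact simplex (with the convention $0\log^2 0=0$), hence attains its maximum. The boundary faces where some $p_i=0$ correspond to effective dimension $d'<d$, so by an inductive/monotonicity argument it suffices to treat the relative interior and then compare with lower-dimensional maxima; one expects (and must check) that the $d$-dimensional maximum dominates, reflecting the fact that $N(d)$ is increasing in $d$.

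Next I would set up the stationarity conditions in the interior. Writing $\mu:=\sum_i p_i\log p_i=-S(\rho)$, the gradient condition $\partial_{p_i}f=\text{const}$ reads $\log^2 p_i+2\log p_i-2\mu(\log p_i+1)=\lambda$ for all $i$, i.e. each $\log p_i$ is a root of the \emph{same} quadratic $x^2+2x-2\mu x-2\mu-\lambda=0$. A quadratic has at most two distinct roots, so at any stationary point the $p_i$ take at most two distinct values, say value $a$ with multiplicity $k$ and value $b$ with multiplicity $d-k$, where $ka+(d-k)b=1$. The key structural claim is then that the optimum has $k=1$, i.e. the spectrum is $\bigl(1-r,\frac{r}{d-1},\dots,\frac{r}{d-1}\bigr)$: to see this I would substitute the two-value ansatz and, after reparametrizing by $r$ (the total weight on the ``small'' block) and by $k$, show that the resulting one-parameter family's variance is maximized at $k=1$; concretely, for fixed $r$ the variance is largest when the smaller probabilities are spread over as many levels as possible, which by a convexity/rearrangement estimate forces $k=1$ and the small block to be flat. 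With $k=1$ the expression collapses exactly to $r(1-r)\bigl(\log\frac{(1-r)(d-1)}{r}\bigr)^2$, which is precisely the objective in the definition (\ref{definerealN}) of $N(d)$, so the maximum over such states equals $N(d)$, and optimizing the remaining single variable $r\in(0,1/2)$ — one checks the maximizer lies in this range and is unique by examining the derivative — gives $r_d$ and establishes both (\ref{firstineqformaxvarianceinfo}) and the equality characterization (\ref{formofrhowidehatinvariancethm}). The bound $N(d)<N_d$ follows from the elementary estimate $r(1-r)\log^2\frac{1-r}{r}\le$ (a small constant) combined with $2r(1-r)\log\frac{1-r}{r}\log(d-1)\le\frac{1}{2}\log(d-1)\cdot\frac{1}{2}$-type AM–GM bookkeeping, and I would also invoke Lemma \ref{lemmareducingonevariableoptimization} for the clean comparison. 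The case $d=1$ is immediate since the only state has $p_1=1$.

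The main obstacle is the structural step that the optimal spectrum has the ``one large, rest equal'' form, i.e. ruling out the other two-value stationary points $k\ge 2$. The stationarity analysis only tells us the optimum is among the finitely many two-value configurations (parametrized by $k$ and, for each $k$, by one continuous variable), so one must actually compare their maxima. I expect the cleanest route is to fix the ``shape'' $(k, r)$ and show that the map $k\mapsto(\text{max over }r\text{ of the variance})$ is strictly decreasing in $k$ — intuitively, concentrating probability into a single spike against a flat background produces the largest logarithmic spread and hence the largest variance of the surprisal — but making this rigorous requires a careful monotonicity estimate on $r_k(1-r_k)\log^2\!\bigl(\tfrac{(d-k)(1-r_k)}{k\,r_k}\bigr)$ as $k$ varies, possibly handled by comparing to the $k=1$ case directly via a substitution that merges levels. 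A secondary technicality is the boundary/induction argument ensuring no lower-dimensional face beats the interior optimum, which I would dispatch using the already-noted monotonicity of $N(d)$ in $d$.
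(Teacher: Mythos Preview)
Your approach is essentially the same as the paper's: reduce to the spectrum, set up the Lagrange condition, observe that stationarity forces each $\log p_i$ to satisfy a single quadratic and hence at most two distinct nonzero values occur, then optimize over the remaining parameters. The paper also allows zeros directly in the two-value ansatz (writing the nonzero part as $(1-r)/m$ with multiplicity $m$ and $r/n$ with multiplicity $n$, $m+n\leq d$), which absorbs your separate boundary/induction step.

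The one place where you make things harder than necessary is the ``main obstacle'' of forcing $k=1$. In the paper this is immediate once the variance is written explicitly for the general two-value ansatz: a two-line computation gives
\[
\var_\rho(-\log\rho)\;=\;r(1-r)\Bigl(\log\tfrac{1-r}{r}+\log\tfrac{n}{m}\Bigr)^{2},
\]
and since one may take $r\leq 1/2$ (so $\log\tfrac{1-r}{r}\geq 0$), for fixed $r$ this is maximized by making $n/m$ as large as possible, i.e.\ $m=1$, $n=d-1$. There is no need to first optimize in $r$ for each $k$ and then compare; fixing $r$ first trivializes the multiplicity question. After that, the single-variable maximization in $r$ and the appeal to Lemma~\ref{lemmareducingonevariableoptimization} for $N(d)<N_d$ proceed exactly as you outline.
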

Theorem \ref{maxvarianceinfothm} is proved in Section \ref{proofofmaxvariancesubsect} by the method of Lagrange multipliers.

\begin{remark}[The quantities $N(d)$ and $N_d$]\label{heatcapacityboundistight}
$N(d)$ from Eq.\ (\ref{definerealN}) is well approximated by the easily computable $N_d\equiv\frac{1}{4}\log^2(d-1)+1$ since, by Lemma \ref{lemmareducingonevariableoptimization} (see also Fig.\ \ref{Mfigure}, lower right panel),
\begin{align}
N_d~>~N(d)~>~N_d-1~.
\end{align}
One can even show $N(d)=N_d-O\left(1/\log^2d\right)$ for $d\to\infty$, the optimal $r$ in (\ref{definerealN}) being $r_d=1/2-1/\log(d-1)+O\left(1/\log^2d\right)$. Instead of the maximization (\ref{definerealN}), one may compute $N(d)$ numerically by finding the optimal $r=r_d\in[0,1/2]$ as the (unique) solution of $(1-2r)\log\left(\frac{1-r}{r}(d-1)\right)=2$ and plugging it back.

Note that the quantity $N(d)$ from the optimal upper bound (\ref{firstineqformaxvarianceinfo}) appears in the quadratic Taylor term of the optimal lower bound $M(\Delta,d)$ in (\ref{lowerboundeqninmaintheorem}) as well (cf.\ Remark \ref{taylorexpansionremark}). This can be understood in a pedestrian way by minimizing $D(\rho+\varepsilon\|\rho)$ at fixed $\rho$ and for small $\varepsilon$ (with $[\rho,\varepsilon]=0$; see beginning of the proof of Theorem \ref{maintheoremineqpaper}) under the constraint $S(\rho+\varepsilon)-S(\rho)=\delta$ (small), which gives $\delta^2/2\var_\rho(-\log\rho)+O(\delta^3)$. Finally minimizing this over all $\rho$, the quadratic term of $M(\delta,d)$ is therefore $\delta^2/2N(d)$ by Theorem \ref{maxvarianceinfothm}.
\end{remark}

\subsubsection{Maximum heat capacity in finite dimensions}\label{maxheatcapacitysubsection}
We now explain the thermodynamic significance of Theorem \ref{maxvarianceinfothm} (for a more detailed exposition of the thermodynamics background see also \cite[Appendix\ A]{longlandauercompanionpaper}). Let $H$ be a \emph{Hamiltonian} of a $d$-dimensional system, i.e.\ a Hermitian $d\times d$-matrix (diagonal for classical systems); this operator determines the physical energy of the system. Then, at any \emph{temperature} $T\in(0,\infty)$, the corresponding \emph{thermal} (or equilibrium) state is
\begin{align}\label{thermalstateinvariancesubsecteqn}
\rho_T~:=~\frac{e^{-H/T}}{\tr{e^{-H/T}}}~,
\end{align}
with units chosen such that Boltzmann's constant $k_B=1$. The (average) \emph{energy} of the thermal state is the energy expectation value $E(T)~:=~\tr{H\rho_T}$, and the \emph{heat capacity} $C(T)$ quantifies the rate of change of the system energy upon temperature variation:
\begin{align}\label{computeheatcapacity}
C(T)~:=~\left.\frac{dE}{dT}\right|_T~=~\frac{d}{dT}\tr{H\frac{e^{-H/T}}{\tr{e^{-H/T}}}}~=~\var_{\rho_T}(H/T)~=~\var_{\rho_T}(-\log\rho_T)~,
\end{align}
where we omitted the little computation of the derivative, and used in the last step that the variance is unchanged under addition of a constant term (proportional to $\1$).

Eq.\ (\ref{computeheatcapacity}) shows that the heat capacity does not depend on $H$ and $T$ separately, but only on the thermal state $\rho_T$. Note that every full-rank state $\rho$ can be interpreted as the thermal state of some Hamiltonian $H_\rho:=-\log\rho$, and common extensions of the above framework include even some (or all) non-full-rank states; it is for example conventional to allow $T\in[0,\infty]$ and define $\rho_0$ to be the normalized projector onto the ground space of $H$, $H/\infty:=0$, and $C(\infty):=\lim_{T\to\infty}C(T)$.

Further note that, by (\ref{computeheatcapacity}), the heat capacity also equals the energy fluctuations $\var_{\rho_T}(H)$, i.e.\ the second moment of the energy, up to a factor of $T^2$.

Theorem \ref{maxvarianceinfothm} has thus the following corollary:
\begin{corollary}[Maximum heat capacity in $d$ dimensions]\label{maxheatcapacity}
Let $H$ be any Hamiltonian on a $d$-dimensional system, and let $T\in[0,\infty]$. Then its heat capacity $C(T)$ is uniformly bounded in terms of the dimension:\ for $d\geq2$,
\begin{align}\label{eqnoptimalupperboundonheatcapacity}
C(T)~\leq~N(d)~<~N_d~\equiv~\frac{1}{4}\log^2(d-1)+1~,
\end{align}
with $N(d)$ from Eq.\ (\ref{definerealN}). For $d=1$, $C(T)=0$.
\end{corollary}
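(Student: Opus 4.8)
The plan is to read the corollary off directly from Theorem \ref{maxvarianceinfothm} via the identity $C(T)=\var_{\rho_T}(-\log\rho_T)$ already recorded in Eq.\ (\ref{computeheatcapacity}). First I would treat the generic case $T\in(0,\infty)$. Here the thermal state $\rho_T=e^{-H/T}/\tr{e^{-H/T}}$ is a bona fide (in fact full-rank) state on the same $d$-dimensional space, so Theorem \ref{maxvarianceinfothm}, applied to the choice $\rho:=\rho_T$, yields immediately
\begin{align}
C(T)~=~\var_{\rho_T}(-\log\rho_T)~\leq~N(d)~<~N_d~=~\tfrac14\log^2(d-1)+1~.
\end{align}
Since by Eq.\ (\ref{computeheatcapacity}) the heat capacity depends on $H$ and $T$ only through $\rho_T$, no property of $H$ beyond Hermiticity enters, and the bound is uniform over all Hamiltonians of the given dimension.

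Second, I would dispose of the boundary temperatures $T\in\{0,\infty\}$. Under the stated convention these are defined as the limits $C(0):=\lim_{T\to0}C(T)$ and $C(\infty):=\lim_{T\to\infty}C(T)$; since the bound $C(T)\leq N(d)$ just established holds on the whole open interval $(0,\infty)$, it passes to both limits, giving $C(0),C(\infty)\leq N(d)$. Alternatively, one may observe that $\rho_0$ (the normalized projector onto the ground space of $H$) and $\rho_\infty=\1/d$ are themselves $d$-dimensional states, the latter with $-\log\rho_\infty=(\log d)\1$ constant so that $\var_{\rho_\infty}(-\log\rho_\infty)=0$; under the convention $C(T)=\var_{\rho_T}(-\log\rho_T)$ for these endpoint values of $T$ as well, Theorem \ref{maxvarianceinfothm} again applies directly. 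Finally, for $d=1$ the unique state is $\rho=1$ with $-\log\rho=0$, hence $\var_\rho(-\log\rho)=0$ and $C(T)=0$, matching the claim.

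There is essentially no obstacle here: the content of Corollary \ref{maxheatcapacity} is entirely contained in Theorem \ref{maxvarianceinfothm} together with the elementary differentiation underlying Eq.\ (\ref{computeheatcapacity}) (which identifies $C(T)$ with the variance of the surprisal of $\rho_T$, using that the variance is invariant under adding a multiple of $\1$). The only point requiring a word of care is the extension to the endpoint temperatures $T=0$ and $T=\infty$, which is handled either by the limit argument above or by noting that the relevant limiting density matrices are still states on the $d$-dimensional space, so that the theorem applies to them unchanged.
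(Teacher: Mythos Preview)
Your proposal is correct and matches the paper's approach exactly: the paper presents Corollary \ref{maxheatcapacity} as an immediate consequence of Theorem \ref{maxvarianceinfothm} via the identity $C(T)=\var_{\rho_T}(-\log\rho_T)$ in Eq.\ (\ref{computeheatcapacity}), without writing out a separate proof. Your added care with the endpoint temperatures $T\in\{0,\infty\}$ and the $d=1$ case is appropriate and slightly more explicit than the paper itself.
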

Note that the first bound in (\ref{eqnoptimalupperboundonheatcapacity}) is tight for any $d$:\ the optimal state $\rho$ from (\ref{formofrhowidehatinvariancethm}) has full-rank and is thus the thermal state of the Hamiltonian $H:=-\log\rho$ at temperature $T:=1$.

\begin{remark}[Exponential family of optimal states (\ref{formofsigmaandrhoinDvsDeltaLemma}) and (\ref{formofrhowidehatinvariancethm})]\label{exponentialfamilyremark}The optimal states $\rho$ and $\sigma$ from (\ref{formofsigmaandrhoinDvsDeltaLemma}) come, for all values of $\Delta$, from the same exponential family:\ defining a $d$-dimensional ``Hamiltonian'' $H_{opt}:=\diag(-1,0,\ldots,0)$, it is $\sigma,\rho=e^{-H_{opt}/T_{\sigma,\rho}}/\tr{e^{-H_{opt}/T_{\sigma,\rho}}}$ for some ``temperatures'' $T_{\sigma,\rho}\in[0,\infty]$. The same is true for the state (\ref{formofrhowidehatinvariancethm}) having maximal surprisal variance or heat capacity; thermal states with one large occupation number (eigenvalue) $\approx1/2$ and completely degenerate small occupations have thus the largest energy fluctuations \cite{mackay}.

On an $N$-particle system, e.g.\ the space $\C^d=(\C^l)^{\otimes N}$ of $N$ $l$-level particles, the Hamiltonian $H_{opt}$ means physically that the system energy is minimized ($-1$) when each of the $N$ particles is in a preferred state $\ket{0}$ and equals $0$ otherwise, irrespective of the specific state. This very strong interaction between all $N$ particles leads, at some temperature $T_{crit}$, to the largest possible heat capacity of any $d=l^N$-dimensional system by Corollary \ref{maxheatcapacity},
\begin{align}\label{quadraticheatcapacity}
C(T_{crit})~=~N(l^N)~\simeq~N_{d=l^N}~\simeq~\frac{1}{4}\log^2l^N~=~N^2\,\frac{\log^2l}{4}~.
\end{align}
This is in stark contrast to a system of $N$ independent (non-interacting) particles, whose heat capacity is proportional to $N$, i.e.\ ``extensive'', whereas (\ref{quadraticheatcapacity}) is faster than extensive. Such extensivity is also usually assumed in thermodynamics e.g.\ by the Dulong-Petit law \cite{thermodynamicstheory}, at least for the most commonly considered systems made up of weakly-interacting particles.

When a system's heat capacity $C(T)/N$ per particle diverges at some temperature $T=T_{crit}$, one sometimes speaks of a second-order phase transition, and the system can then absorb or release energy density by just ``reorganizing'' its state without temperature change \cite{mackay}. Corollary \ref{maxheatcapacity} shows explicitly that such effects cannot occur for finite(-dimensional) systems.
\end{remark}

\section{Applications}\label{applicationsection}
Here we outline some implications for thermodynamics and information theory of Theorem \ref{maintheoremineqpaper}, the inequality relating relative entropy and entropy difference (see Section \ref{relentvsentdifferencesubsection}).

\subsection{Thermodynamics applications}
In Section \ref{stepwiseequlibrationsubsect} we examine how slowly equilibration processes \cite{andersgiovannetti} have to be conducted to make them (close to) \emph{thermodynamically reversible}. A relation between an intensive and an extensive quantity in many-particle systems is given in Section \ref{SdensityVsFsubsection}. Regarding the extensivity of the heat capacity in many-body systems, see also the previous Remark \ref{exponentialfamilyremark}.

The following sections also serve to illustrate the prominence of relative entropy and entropy difference in thermodynamics and statistical physics.

\subsubsection{Approach to reversibility in equilibration processes}\label{stepwiseequlibrationsubsect}
In thermodynamics it is a common assumption (which can be justified in specific models) that a system with a Hamiltonian $H$ and in weak interaction with an environment at temperature $T$ will ``equilibrate'' to the thermal final state $\rho_f=e^{-H/T}/\tr{e^{-H/T}}$ (see \cite[Appendix\ A]{longlandauercompanionpaper} and Section \ref{maxheatcapacitysubsection} above), irrespective of its initial state $\rho_i$. The system's energy change associated with such a spontaneous state change is called \emph{heat flow} or \emph{heat} $\Delta Q$ \cite{puszworonowicz,andersgiovannetti}:
\begin{align}
\Delta Q~:=~\tr{(\rho_f-\rho_i)H}~=~T\,\tr{(\rho_f-\rho_i)(-\log\rho_f)}~.
\end{align}
One can relate this to the system's entropy change $\Delta S:=\Delta(\rho_f,\rho_i)=S(\rho_f)-S(\rho_i)$:
\begin{align}\label{onestepprocessquantityeqn}
\frac{\Delta Q}{T}~=~\Delta S-D(\rho_i\|\rho_f)~\leq~\Delta S~.
\end{align}
(In order for all quantities to be well-defined, we assume $\rho_f$ to be a full-rank state, i.e.\ assume $T\in(0,\infty]$; for simplicity and without further mentioning, we assume all states in this section to be of full-rank or at least of the same support.)

The above equilibration processes can also be conducted in a stepwise fashion, which was presented and analyzed in detail by Anders and Giovannetti \cite{andersgiovannetti}. One can view this as an attempt to formalize the vague notion of ``slowness'' of an equilibration process, which according to common physics folklore should make the process ``thermodynamically reversible''. We now recapitulate some elements from \cite{andersgiovannetti} and complement their analysis by a lower bound on how ``close'' a process can be to reversibility.

In a $k$-step process, adjust the system Hamiltonian successively first to $H_1$, then instantaneously to $H_2$, \ldots, and finally to $H_k\equiv H$, and let the system equilibrate with an environment at temperature $T_j$ in each step $j=1,\ldots,k$ (often, it will be either $H_j\equiv H$ for all $j$, or $T_j\equiv T$ for all $j$). We denote the associated intermediate thermal states by $\rho_j:=e^{-H_j/T_j}/\tr{e^{-H_j/T_j}}$ (note, $\rho_k=\rho_f$) and define $\rho_0:=\rho_i$. The entropy change $\Delta S$ of the overall process equals just the sum of all changes $\Delta(\rho_j,\rho_{j-1})$, and the sum of the single-step quantities $\Delta Q_j/T_j$ satisfies, by (\ref{onestepprocessquantityeqn}),
\begin{align}\label{clausiusforsteps}
\sum_{j=1}^k\frac{\Delta Q_j}{T_j}~=~\sum_{j=1}^k\big[\Delta(\rho_j,\rho_{j-1})-D(\rho_{j-1}\|\rho_j)\big]~=~\Delta S-\sum_{j=1}^kD(\rho_{j-1}\|\rho_j)~\leq~\Delta S~.
\end{align}
The inequality between the process quantity on the LHS and $\Delta S$ is the \emph{Clausius Theorem} \cite{andersgiovannetti}, often cited to be an incarnation of the Second Law of Thermodynamics. Note that, for $T_j\equiv T$, the LHS is just proportional to the total heat flow into the system, $\sum_j\Delta Q_j$.

In the special case \cite{andersgiovannetti} where the intermediate steps $j=1,\ldots,k-1$ are chosen such that the states $\rho_j$ interpolate linearly between $\rho_i=\rho_0$ and $\rho_f=\rho_k$, i.e.
\begin{align}\label{linearinterpolation}
\rho_j~=~\left(1-\frac{j}{k}\right)\rho_i+\frac{j}{k}\,\rho_f\qquad\text{for}~j=0,\ldots,k\,,
\end{align}
then the LHS of (\ref{clausiusforsteps}) can also be bounded from below in terms of the entropy difference \cite{andersgiovannetti}:
\begin{align}
\sum_{j=1}^k\frac{\Delta Q_j}{T_j}~&=~\Delta S\,-\,\frac{D(\rho_f\|\rho_i)+D(\rho_i\|\rho_f)}{k}\,+\,\sum_{j=1}^kD(\rho_j\|\rho_{j-1})\\
&\geq~\Delta S\,-\,\frac{D(\rho_f\|\rho_i)+D(\rho_i\|\rho_f)}{k}~.\label{lowerboundonprocessquantity}
\end{align}
Thus, as the number of steps $k$ in the interpolation (\ref{linearinterpolation}) becomes finer (and if $\rho_i$, $\rho_f$ have the same support), one has $\sum_j\Delta Q_j/T_j\to\Delta S$. This is remarkable since a priori the quantity $\sum_j\Delta Q_j/T_j$ depends on the details of the process, whereas $\Delta S=\Delta(\rho_f,\rho_i)$ depends only on its initial and final state.

Any process $\rho_i\mapsto\rho_1\mapsto\ldots\mapsto\rho_f$ satisfying equality $\sum_j\Delta Q_j/T_j=\Delta S$ is called \emph{(thermodynamically) reversible}, as intuitively one expects that the reverse of such a process leads back to the original situation. This intuition can be made rigorous for the process (\ref{linearinterpolation}):\ the entropy production $\Delta S'=-\Delta S$ of the reverse process $\rho_f\mapsto\rho_{k-1}\mapsto\ldots\mapsto\rho_i$ exactly cancels $\Delta S$, and also the process quantity $\sum_j\Delta Q'_j/T'_j$ will come close to $\Delta S'\approx-\sum_j\Delta Q_j/T_j$ by reasoning analogous to (\ref{clausiusforsteps}) and (\ref{lowerboundonprocessquantity}). For constant temperatures $T_j\equiv T$, the last fact means that (almost) no heat is produced during the entire cyclic process $\rho_i\mapsto\ldots\mapsto\rho_f\mapsto\ldots\mapsto\rho_i$, i.e.\ (almost) none of the work expended to (gradually) alter the Hamiltonian \cite{puszworonowicz} is converted to heat, which physically is a less useful form of energy than work. In actual physical realizations, thermodynamic processes become irreversible when the system state $\rho(t)$ is not at all times $t$ close to the thermal state determined by the system Hamiltonian $H(t)$ and the environment temperature $T(t)$. This happens for example when the process is conducted too fast so that the system cannot fully equilibrate at each infinitesimal step.

From this reasoning, one can quantify the \emph{degree of irreversibility} of any process $\rho_i\mapsto\rho_1\mapsto\ldots\mapsto\rho_f$ by the quantity $\sum_{j=1}^kD(\rho_{j-1}\|\rho_j)$ in (\ref{clausiusforsteps}). This corresponds to the amount of work wasted at least as heat in any cyclic completion $\rho_i\mapsto\rho_1\mapsto\ldots\mapsto\rho_f=\rho_k\mapsto\rho_{k+1}\mapsto\ldots\mapsto\rho_{k+m}\equiv\rho_i$, since $\sum_{j=1}^{k+m}\Delta Q_j/T_j\leq-\sum_{j=1}^{k}D(\rho_{j-1}\|\rho_j)$ by (\ref{clausiusforsteps}). Quantitatively, denoting the minimal temperature $T_{min}:=\min_{1\leq j\leq k}T_j$, the excess heat production is at least
\begin{align}\label{wastedwork}
W_{waste}~\geq~T_{min}\,\sum_{j=1}^kD(\rho_{j-1}\|\rho_j)~,
\end{align}
which is exact if $T_j\equiv T$ for all $j$. Theorem \ref{maintheoremineqpaper} now bounds the sum in (\ref{wastedwork}) from below:
\begin{align}
\sum_{j=1}^kD(\rho_{j-1}\|\rho_j)~&\geq~\sum_{j=1}^kM\left(\Delta(\rho_{j-1},\rho_j),d\right)~=~k\,\sum_{j=1}^k\frac{1}{k}M(\Delta(\rho_{j-1},\rho_j),d)\label{firstlowerbounddegreeofirreversiblity}\\
&\geq~k\,M\left(\sum_{j=1}^k\frac{1}{k}\Delta(\rho_{j-1},\rho_j),\,d\right)~=~k\,M\left(\frac{-\Delta S}{k},\,d\right)\label{Mboundafterconvexity}\\
&\geq~\frac{1}{k}\,\frac{(\Delta S)^2}{3\log^2d}~,\label{irreversibdegreeinvk}
\end{align}
where $d<\infty$ denotes the dimension of the system, the second inequality is by convexity of the function $M$ (Theorem \ref{propertiestheorem}), and we exemplarily used the lower bound (\ref{quadraticlowerboundforallDelta}).

Achieving a degree $\varepsilon$ of reversibility by a stepwise process thus necessitates a minimum number $k=O(1/\varepsilon)$ of steps via Eq.\ (\ref{irreversibdegreeinvk}). When $k$ interpreted as the time duration of the entire process -- assuming that each equilibration step consumes roughly equal time -- then (\ref{irreversibdegreeinvk}) substantiates the folklore whereby thermodynamically reversible processes have to be conducted ``infinitely slowly''. Our estimate is thus relevant for fundamental thermodynamics and especially for small systems \cite{popescureachlimit}, as it delineates where the idealized but commonplace notion of \emph{reversible process} can apply. It also provides new heat bounds for processes out of equilibrium in the area of non-equilibrium thermodynamics \cite{lindbladbook,jarzynskiequality,jarzynskireview}.

Although the lower bound (\ref{Mboundafterconvexity}) is essentially tight in the typical thermodynamics situation where only the entropy difference $\Delta S$ between two states is known, it becomes trivial for $\Delta S=0$. In this and other cases, when in addition the initial and final states $\rho_i$, $\rho_f$ are known, one may use an estimate similar to (\ref{firstlowerbounddegreeofirreversiblity})--(\ref{irreversibdegreeinvk}) but based on Pinsker's inequality (\ref{pinskerinequality}):
\begin{align}
\sum_{j=1}^kD(\rho_{j-1}\|\rho_j)~&\geq~\sum_{j=1}^k\frac{1}{2}\|\rho_{j-1}-\rho_j\|_1^2~\geq~\frac{k}{2}\left(\sum_{j=1}^k\frac{1}{k}\|\rho_{j-1}-\rho_j\|_1\right)^2~\geq~\frac{\|\rho_i-\rho_f\|_1^2}{2k}~.
\end{align}

On the topic of stepwise processes we finally remark that the approach to reversibility $\sum_{j=1}^k\Delta Q_j/T_j\to\Delta S$ for $k\to\infty$ is \emph{not} special to the linear interpolation process (\ref{linearinterpolation}) \cite{andersgiovannetti}. Rather, for any (piecewise continuously differentiable) curve $\rho(t)$ in state space with $\rho(0)=\rho_i$, $\rho(1)=\rho_f$, a discretization at points $0=t_0<t_1<\ldots<t_k=1$ gives
\begin{align}
\sum_{j=1}^k\frac{\Delta Q_j}{T_j}~=&~\sum_{j=1}^k\tr{(-\log\rho(t_j))(\rho(t_j)-\rho(t_{j-1}))}\\
\to&~\int_{t=0}^1\tr{(-\log\rho(t))\,d\rho(t)}~=~\int_0^1dt\,\tr{-\dot{\rho}(t)\log\rho(t)}\\
=&~\int_0^1dt\,\frac{d}{dt}\tr{\rho(t)-\rho(t)\log\rho(t)}~=~S(\rho_f)-S(\rho_i)~=~\Delta S~,
\end{align}
with convergence as the discretization becomes finer, $k\to\infty$ and $\max_j|t_j-t_{j-1}|\to0$ (i.e.\ a Riemann sum). Thus, any state change $\rho_i\mapsto\rho_f$ can be made thermodynamically reversible (when $\supp[\rho_i]=\supp[\rho_f]$). For the discretized process $\rho(t)$ we do however not have a lower convergence estimate as in (\ref{lowerboundonprocessquantity}) (the upper bound from the Clausius Theorem (\ref{clausiusforsteps}) holds of course for any discretization).

\bigskip

In this section, we have considered thermalizing processes, bringing an arbitrary state $\rho_i$ to a thermal state $\rho_f$, and have measured the heat production w.r.t.\ the Hamiltonian $H$ corresonding to the \emph{final} (thermal) state \cite{andersgiovannetti}. This leads to the Clausius inequality (\ref{clausiusforsteps}).

In \cite{longlandauercompanionpaper} we use Theorem \ref{maintheoremineqpaper} in the reverse situation where an initially thermal state $\rho_i$ is used as the resource in a process leading away from equilibrium. The heat production is again measured w.r.t.\ the system's Hamiltonian, which there however is related to the \emph{initial} state and reverses the inequality (\ref{clausiusforsteps}) \cite{andersgiovannetti,longlandauercompanionpaper}. Furthermore, the paper \cite{longlandauercompanionpaper} concerns a bipartite scenario -- the Landauer process involving a system and a thermal reservoir \cite{landauer} -- where a Second Law-like statement can be formulated more properly and where the above stepwise process may be implemented by swapping the system and reservoir states.

\subsubsection{Free energy vs.\ entropy density}\label{SdensityVsFsubsection}
To further elucidate the thermodynamic meaning of the quantity $D(\rho_i\|\rho_f)$ for a thermal final state $\rho_f=e^{-H/T}/\tr{e^{-H/T}}$ (cf.\ Eq.\ (\ref{onestepprocessquantityeqn}) in Section \ref{stepwiseequlibrationsubsect}), we relate it to the \emph{work extractable at constant temperature} from the state $\rho_i$, and then examine it in a many-particle system.

For this, consider an \emph{isothermal process}, i.e.\ where the temperature $T$ remains constant and only the Hamiltonian is changed from its initial value $H_0\equiv H$ in $k$ successive steps to $H_1$, \ldots, $H_k\equiv H$, at each of which the system equilibrates as in Section \ref{stepwiseequlibrationsubsect}. The total heat flow $\Delta Q:=\sum_{j=1}^k\Delta Q_j$ during the process then satisfies the Clausius inequality $T\Delta S\geq\Delta Q$ (Eq.\ (\ref{clausiusforsteps})), so that
\begin{align}
T\,D(\rho_i\|\rho_f)~&=~-\tr{H(\rho_f-\rho_i)}+T\left[S(\rho_f)-S(\rho_i)\right]~=~F(\rho_i)-F(\rho_f)\\
&\geq~-\Delta E+\Delta Q~=~-\Delta W~,\label{inequalityinfreeenergydifference}
\end{align}
where we have defined:\ the \emph{free energy} $F(\rho):=\tr{H\rho}-TS(\rho)$ of a state $\rho$ (at temperature $T$ and for Hamiltonian $H$); the \emph{internal energy increase} $\Delta E:=\tr{H(\rho_f-\rho_i)}$; and the \emph{work} $\Delta W:=\Delta E-\Delta Q$ done on the system \cite{puszworonowicz,procaccialevine,andersgiovannetti}.

According to Section \ref{stepwiseequlibrationsubsect}, equality in (\ref{inequalityinfreeenergydifference}) can be approached by a suitable (reversible) process (note that then the jump from $H=H_0$ to the first equilibration step $H_1\approx-T\log\rho_i$ may be big, whereas the further steps $H_1\mapsto\ldots\mapsto H_k=H$ are small). Thus, the amount $T\,D(\rho_i\|\rho_f)=(-\Delta W)_{max}$ of work \emph{can} be extracted from the state $\rho_i$ by a thermodynamic process at temperature $T$ and using the internal energy function $H$. Conversely, for given temperature $T$ and Hamiltonian $H$, this is also the \emph{maximum} amount of work extractable from $\rho_i$ since, for any process leading to a final state $\rho'_f$ (not necessarily thermal for either $H$ or $T$),
\begin{align}
-\Delta W'~&=~-\Delta E'+\Delta Q'~\leq~-\tr{H(\rho'_f-\rho_i)}+T\left[S(\rho'_f)-S(\rho_i)\right]\nonumber\\
&=~F(\rho_i)-F(\rho'_f)~=~\left[F(\rho_i)-F(\rho_f)\right]-\left[F(\rho'_f)-F(\rho_f)\right]\nonumber\\
&=~T\,D(\rho_i\|\rho_f)-T\,D(\rho'_f\|\rho_f)~\leq~T\,D(\rho_i\|\rho_f)~.\label{amountstothermodyninequ}
\end{align}
The last inequality is due to the nonnegativity of the relative entropy, which here in more physical terms amounts to the fact that the free energy $F(\rho'_f)$ attains its minimum at the thermal state $\rho'_f=\rho_f$ (uniquely for $T\neq0$).

Theorems \ref{maintheoremineqpaper} and \ref{propertiestheorem} provide thus lower bounds on the extractable work at constant temperature $T$:
\begin{align}\label{extworkentropydensity}
W_{extr,T}~=~T\,D(\rho_i\|\rho_f)~\geq~T\,M(-\Delta S,d)~\gtrsim~2T\left(\frac{\Delta S}{\log d}\right)^2~=~2T\left(\frac{\Delta s}{\log l}\right)^2~,
\end{align}
where in the last step we have assumed a system of $N$ $l$-level particles, i.e.\ $d=l^N$ (cf.\ Remark \ref{exponentialfamilyremark}), and defined the change in \emph{entropy density} $\Delta s:=\Delta S/N=(S(\rho_f)-S(\rho_i))/N$ \cite{brattelirobinson}.

Inequality (\ref{extworkentropydensity}) seems quite unusual as its LHS is the ``extensive'' free energy difference or extractable work whereas the RHS is an ``intensive'' quantity, given by the entropy density and temperature; moreover, in the ``thermodynamic limit'' $N\to\infty$, the inequality is essentially tight. The reason for this is that states attaining equality are of the form (\ref{formofsigmaandrhoinDvsDeltaLemma}), which are strongly correlated as discussed in Remark \ref{exponentialfamilyremark}, such that one cannot speak of few-particle properties and the designation ``extensive'' is not appropriate.

\subsection{Information-theoretic applications}
We have already outlined in Section \ref{subsectionvarsurprisal} the meaning of $\sqrt{\var_\rho(\log\rho)}$ as the fluctuation in codeword length of an optimal prefix code; for a source with $d$ distinct signals this fluctuation is at most $\simeq\frac{1}{2}\log d$ by Theorem \ref{maxvarianceinfothm}. In the following, we discuss implications in information theory of the lower bound on the relative entropy (Theorems \ref{maintheoremineqpaper} and \ref{propertiestheorem}).

\subsubsection{Cost of wrong code, universal codes, and Shannon channel capacity}\label{wrongcodesubsubsection}
For a source producing i.i.d.\ signals $i$ according to a classical probability distribution $\rho=\{p_i\}_{i=1}^d$, Shannon's source compression theorem \cite{shannonmathemtheorycommun,coverthomas} shows any prefix code with a $D$-ary alphabet to have an average length of at least $S(\rho)/\log D$ per encoded signal. The lower the entropy of the signal distribution, the shorter on average the encoded message can be. This length is in fact achievable -- up to less than $1$ alphabet symbol -- by assigning codewords of length $\lceil-\log_Dp_i\rceil$ to the signals.

If one however wrongly assumes the signals $i$ to be distributed according to $\sigma=\{q_i\}_{i=1}^d$ and constructs a code for this distribution, with codewords of length $\lceil-\log_Dq_i\rceil$, then the average code length $L_\sigma$ will be
\begin{align}
L_\sigma~=~\sum_{i=1}^dp_i\left\lceil-\log_Dq_i\right\rceil~\geq~-\frac{1}{\log D}\sum_{i=1}^dp_i\log q_i~=~\frac{S(\rho)}{\log D}+\frac{D(\rho\|\sigma)}{\log D}~.
\end{align}
The last term is the cost of the wrong code \cite{coverthomas} beyond the optimal average code length $S(\rho)/\log D$ when one knew the correct distribution.

Theorems \ref{maintheoremineqpaper} and \ref{propertiestheorem} give a lower bound on this penalty just in terms of the difference $\delta=\left(S(\rho)-S(\sigma)\right)/\log D$ between the supposed length $S(\sigma)/\log D$ and the optimal achievable length $S(\rho)/\log D$:
\begin{align}\label{lowerboundoncostofwrongcode}
\frac{D(\rho\|\sigma)}{\log D}~\geq~\frac{M(\delta\log D,d)}{\log D}~\geq~\delta^2\frac{2\log D}{\log^2(d-1)+4}~,
\end{align}
where the last inequality holds only for positive expected savings $\delta\geq0$.

\bigskip

When the signals $i\in\{1,\ldots,d\}$ follow one of the distributions $\rho^\theta$, where the parameter $\theta\in\{1,\ldots,m\}$ is not known, one may choose a coding distribution $\sigma$ (``universal code'') that minimizes the maximal occuring penalty or \emph{redundancy} (see \cite{coverthomas}, Section 13.1):
\begin{align}\label{minimalredundancyeqn}
R^*~:=~\min_\sigma\max_\theta D(\rho^\theta\|\sigma)~.
\end{align}
The theorems from Section \ref{relentvsentdifferencesubsection} give an easily computable lower bound on the quantity $R^*$:\ for this, denote by $S_{min}$ and $S_{max}$ the minimal resp.\ maximal entropy $S(\rho^\theta)$ among the states $\rho^\theta$. Then using the properties from Theorem \ref{propertiestheorem} and Remark \ref{Mincreasingremark}, we have:
\begin{align}
R^*~&\geq~\min_\sigma\max_\theta\,M(S(\rho^\theta)-S(\sigma),d)~=~\min_{S\in[0,\log d]}\max_\theta\,M(S(\rho^\theta)-S,d)\label{asdfqwerasdfyxcv}\\
&=~\min_{S\in[S_{min},S_{max}]}\max\big\{M\left(S_{max}-S,d\right),\,M\left(S_{min}-S,d\right)\,\big\}\label{lastlinementioningMasdfasdf}\\
&\geq~\min_{S\in[S_{min},S_{max}]}~\max_{\Delta\in\{S_{max}-S,\,S_{min}-S\}}\,\left[\frac{2\Delta^2}{\log^2(d-1)+4}+\frac{8\Delta^3}{3\left(\log^2(d-1)+4\right)^2}\right]\\
&\geq~\frac{(S_{max}-S_{min})^2}{2\left(\log^2(d-1)+4\right)}\,-\,\frac{(S_{max}-S_{min})^3}{3\left(\log^2(d-1)+4\right)^2}~,\label{afterassumingsmallerfordeltaleq123}
\end{align}
where the last inequality follows from the observation that for every $S$ one has $|S_{max}-S|\geq(S_{max}-S_{min})/2$ or $|S_{min}-S|\geq(S_{max}-S_{min})/2$. If the following conjecture holds, a stronger lower bound $R^*\geq M(-(S_{max}-S_{min})/2,d)$ would follow from line (\ref{lastlinementioningMasdfasdf}) by the same reasoning.
\begin{conjecture}\label{conjecture}For any $d\geq2$ and any $\Delta\in[0,\log d]$, it is $M(\Delta,d)\,\geq\,M(-\Delta,d)$.\end{conjecture}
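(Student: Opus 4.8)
I would approach Conjecture~\ref{conjecture} as follows. The cases $\Delta=0$ and $\Delta=\log d$ are trivial ($M(\log d,d)=\infty$ by Theorem~\ref{propertiestheorem}), so fix $\Delta\in(0,\log d)$. By Theorem~\ref{maintheoremineqpaper} and Remark~\ref{equalitycasesremark}, $M(\Delta,d)=\binrel(s_\Delta\|r_\Delta)$ for the unique pair with $0\le r_\Delta<s_\Delta\le(d-1)/d$, and $M(-\Delta,d)=\binrel(s_{-\Delta}\|r_{-\Delta})$ for the unique pair with $0\le s_{-\Delta}<r_{-\Delta}\le(d-1)/d$. It helps to write $g(t):=\binH(t)+t\log(d-1)$ for the entropy of the special state (\ref{formofsigmaandrhoinDvsDeltaLemma}) with parameter $t$; this $g$ is strictly increasing and strictly concave on $[0,(d-1)/d]$ onto $[0,\log d]$, with $g'(t)=\log\frac{(1-t)(d-1)}{t}$, and the constraints become $g(s_\Delta)-g(r_\Delta)=\Delta$ and $g(r_{-\Delta})-g(s_{-\Delta})=\Delta$. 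Since $M(-\Delta,d)$ is a minimum, it suffices to exhibit \emph{any} pair of $d$-dimensional states with entropy difference $-\Delta$ and relative entropy at most $M(\Delta,d)$.

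The obvious first attempt is to \emph{swap} the optimal pair of $M(\Delta,d)$: the states (\ref{formofsigmaandrhoinDvsDeltaLemma}) built from $(r_\Delta,s_\Delta)$ have entropy difference $-\Delta$ and relative entropy $\binrel(r_\Delta\|s_\Delta)$, so the conjecture would follow from $\binrel(r\|s)\le\binrel(s\|r)$ for all $0\le r<s\le(d-1)/d$. For $d=2$ this reduces to $(s+r)\log\tfrac sr\ge(2-s-r)\log\tfrac{1-r}{1-s}$ on $\{0\le r<s\le\tfrac12\}$, a delicate but seemingly tractable calculus estimate (both sides and their first two derivatives in $r$ agree at $r=s$, in line with the vanishing quadratic term of $M(\Delta,2)-M(-\Delta,2)$). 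For $d\ge3$, however, $\binrel(r\|s)>\binrel(s\|r)$ does occur (e.g.\ $r=0.4$, $s=0.66$), precisely when $s$ is large — which is the regime of the optimal pair for large $\Delta$ — so the swap is not enough. My proposed fix is to \emph{reflect one state of the optimal pair through the other in the $g$-coordinate}: keep $\sigma$ and replace $\rho$ by the special state with $g$-value $g(s_\Delta)+\Delta$, or keep $\rho$ and replace $\sigma$ by the special state with $g$-value $g(r_\Delta)-\Delta$. Either produces a pair with entropy difference $-\Delta$, and the conjecture then reduces to a one-variable statement: the relative-entropy ``well'' $x\mapsto\binrel(p\|g^{-1}(x))$ (minimal at $x=g(p)$) should climb no faster moving a given amount \emph{up} in $x$ than moving the same amount \emph{down}, i.e.
\begin{align}\label{conjproposalcrux}
\frac{\partial}{\partial x}\binrel\!\left(p\,\big\|\,g^{-1}(x)\right)\!\Big|_{x=g(p)+u}+\frac{\partial}{\partial x}\binrel\!\left(p\,\big\|\,g^{-1}(x)\right)\!\Big|_{x=g(p)-u}~\le~0\qquad(0\le u\le\Delta),
\end{align}
together with its counterpart for the other reflection, in which $\binrel$'s first argument varies. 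Each term of (\ref{conjproposalcrux}) is an explicit elementary function of $g^{-1}(g(p)\pm u)$, so one would try to establish it using the concavity of $g$ and monotonicity in $u$.

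Two obstacles stand in the way. First, a reflected $g$-value may leave $[0,\log d]$; this happens exactly when the optimal $\sigma$ (resp.\ $\rho$) is near maximally mixed, i.e.\ for $\Delta\gtrsim\tfrac12\log d$, where neither reflection is available. One can peel off the extreme top for free — $M(\Delta,d)\to\infty$ as $\Delta\to\log d$ (Theorem~\ref{propertiestheorem}), while $M(-\Delta,d)\le M(-\log d,d)=\log d$ by the monotonicity of Remark~\ref{Mincreasingremark} — but $M(\Delta,d)$ surpasses $\log d$ only in a (for large $d$, narrow) neighbourhood of $\log d$, so the intermediate window still needs a dedicated estimate, e.g.\ a sharper asymmetric lower bound on $M(\Delta,d)$ than (\ref{lowerboundeqninmaintheorem}) together with an upper bound on $M(-\Delta,d)$ from the reflected pair clipped to the parameter square. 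Second, and more fundamentally, (\ref{conjproposalcrux}) itself is the real content. An equivalent formulation pinpoints the difficulty: since $M(\cdot,d)$ is $C^1$ (Theorem~\ref{propertiestheorem}), it is enough to prove $m'(\Delta)+m'(-\Delta)\ge0$ for $m:=M(\cdot,d)$, and by the envelope theorem $m'(\pm\Delta)$ equals the Lagrange multiplier $\lambda_{\pm\Delta}$ of the constraint $S(\sigma)-S(\rho)=\pm\Delta$, so the statement reads $\lambda_\Delta\ge|\lambda_{-\Delta}|$: the minimal relative entropy is more ``reluctant'' to raise the entropy gap than to lower it. This holds near $\Delta=0$ because $m(\Delta)=\frac{\Delta^2}{2N(d)}+\frac{\Delta^3}{6N(d)^2}+O(\Delta^4)$ (Remarks~\ref{taylorexpansionremark} and~\ref{heatcapacityboundistight}) has a positive cubic coefficient, but it cannot be upgraded globally by convexity of $m'$: near $\Delta=-\log d$ the optimum degenerates ($\sigma$ pure, $\rho=\1/d$), which one computes gives $m(\Delta)\approx\log d-\sqrt{2(d-1)(\Delta+\log d)}$, whence $m'''<0$ there. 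One is thus forced to compare $\lambda_\Delta$ and $\lambda_{-\Delta}$ directly from the stationarity conditions $\log\frac{s(1-r)}{r(1-s)}=\lambda\,g'(s)$ and $\frac{r-s}{r(1-r)}=-\lambda\,g'(r)$ at the genuinely distinct optimal pairs $(s_{\pm\Delta},r_{\pm\Delta})$, and carrying out this cross-comparison of the two optimization problems is, I expect, the crux of the conjecture.
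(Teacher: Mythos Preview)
This statement is not proved in the paper: it is explicitly labeled a \emph{Conjecture}, and immediately after stating it the authors write that ``we have not been able to prove it.'' There is thus no proof in the paper to compare your attempt against.

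Your proposal is likewise not a proof, and to your credit you say so. You sketch two approaches (swap the optimal pair; reflect one state through the other in the $g$-coordinate), identify where each breaks, and isolate what you call ``the crux'' as the comparison of Lagrange multipliers $\lambda_\Delta\ge|\lambda_{-\Delta}|$ --- without carrying it out. This is an honest outline of attack, not a proof.

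Your diagnosis is in fact consistent with the paper's own remarks. Right after stating the conjecture, the authors note that the ``stronger conjecture'' $\binrel(s\|r)\ge\binrel(r\|s)$ at the optimal pair $(s,r)$ --- which is exactly your swap approach --- is \emph{generally wrong}, and they give the concrete counterexample $d=1000$, $\Delta=6$, with optimal $s\approx0.9497$, $r\approx0.0723$ and $\binrel(s\|r)\approx2.30<2.51\approx\binrel(r\|s)$. So the first obstacle you flag is real and is the same one the authors hit. Your second approach (reflection in the $g$-coordinate) and the envelope/Lagrange-multiplier reformulation go beyond what the paper records, but as you acknowledge, neither is pushed through: your displayed derivative inequality is presented as something ``one would try to establish,'' not established, and the multiplier comparison is left open. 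In short, the paper has no proof of this conjecture, and your proposal does not supply one either, though it correctly locates the difficulty and articulates plausible routes somewhat further than the paper does.
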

While there is numerical evidence in favor of this conjecture, e.g.\ by plotting for many values of $d\geq2$ the functions $(M(\Delta,d)-M(-\Delta,d))$ in the range $\Delta\in[0,\log d]$ and observing their nonnegativity, and while the conjecture is consistent with all previous analytical results (e.g.\ Theorem \ref{propertiestheorem} and first paragraph of Remark \ref{taylorexpansionremark}), we have not been able to prove it. In particular, the stronger conjecture that $D_2(s\|r)\geq D_2(r\|s)$ holds for the optimal pair $(s,r)$ in the minimization (\ref{definefunctionM}) of $M(\Delta,d)$ at any positive $\Delta>0$ is generally wrong; e.g.\ for $d=1000$ and $\Delta=6$, it is $s\approx0.9497$, $r\approx0.0723$ and thus $D_2(s\|r)\approx2.30<2.51\approx D_2(r\|s)$.

\bigskip

The minimal redundancy $R^*$ from (\ref{minimalredundancyeqn}) equals the Shannon capacity $C(T)$ (measured in $\text{nats}$) of the classical discrete memoryless channel $T:\theta\mapsto i$ that is defined by the transition probabilities $T(i|\theta):=p^\theta_i$ (Theorem 13.1.1 in \cite{coverthomas}, originally due to \cite{gallager,ryabko}; the proof uses a minimax theorem). This gives as above:
\begin{proposition}[Lower bound on the classical Shannon capacity]\label{capacityproposition}For a discrete memoryless channel $T:{\mathcal X}\to{\mathcal Y}$, given by transition probabilities $T(y|x)$ and with finite output dimension $|{\mathcal Y}|\geq2$, the Shannon capacity $C(T)$ is bounded from below as
\begin{align}\label{lowerboundshannoncapacityeqn}
C(T)~\geq~\frac{(S_{max}-S_{min})^2}{2\left(\log^2(|{\mathcal Y}|-1)+4\right)}\,-\,\frac{(S_{max}-S_{min})^3}{3\left(\log^2(|{\mathcal Y}|-1)+4\right)^2}~,
\end{align}
where $S_{max}$ and $S_{min}$ denote the maximal and minimal entropies, respectively, of any column $T(\cdot|x)$ of the transition matrix.
\end{proposition}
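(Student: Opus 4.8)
The plan is to derive Proposition~\ref{capacityproposition} as an immediate consequence of the chain of inequalities \eqref{asdfqwerasdfyxcv}--\eqref{afterassumingsmallerfordeltaleq123} together with the identification of the minimal redundancy $R^*$ with the Shannon capacity $C(T)$. Concretely, given a discrete memoryless channel $T:\mathcal{X}\to\mathcal{Y}$ with finite output alphabet of size $d:=|\mathcal{Y}|\geq2$, each input letter $x\in\mathcal{X}$ induces an output distribution $T(\cdot|x)=\{T(y|x)\}_{y\in\mathcal{Y}}$, which we view as a $d$-dimensional (classical) state $\rho^x$. Taking the role of the parameter $\theta$ in \eqref{minimalredundancyeqn} to be the input letter $x$, the minimal redundancy is $R^*=\min_\sigma\max_x D(\rho^x\|\sigma)$, where the minimum is over probability distributions $\sigma$ on $\mathcal{Y}$.

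First I would invoke the classical identity $C(T)=R^*$ (Theorem~13.1.1 in~\cite{coverthomas}, originally~\cite{gallager,ryabko}), so that it suffices to lower-bound $R^*$. Next I would run exactly the argument already displayed in \eqref{asdfqwerasdfyxcv}--\eqref{afterassumingsmallerfordeltaleq123}: apply Theorem~\ref{maintheoremineqpaper} to replace each $D(\rho^x\|\sigma)$ by $M\big(S(\rho^x)-S(\sigma),d\big)$; use that $M(\Delta,d)$ depends on the states only through their entropies to reduce the inner maximum to the two extreme entropy values $S_{max}=\max_x S(\rho^x)$ and $S_{min}=\min_x S(\rho^x)$ (here one uses monotonicity of $M$ in $|\Delta|$, i.e.\ Remark~\ref{Mincreasingremark}, to see the worst case is attained at an endpoint); then invoke the explicit lower bound \eqref{lowerboundeqninmaintheorem} with $N=N_d=\frac14\log^2(d-1)+1$ to get the quadratic-plus-cubic expression $\frac{2\Delta^2}{\log^2(d-1)+4}+\frac{8\Delta^3}{3(\log^2(d-1)+4)^2}$; and finally observe that for every candidate $S\in[S_{min},S_{max}]$ at least one of $|S_{max}-S|$, $|S_{min}-S|$ is $\geq (S_{max}-S_{min})/2$, which yields \eqref{lowerboundshannoncapacityeqn} upon substituting $\Delta=(S_{max}-S_{min})/2$ into the quadratic-plus-cubic bound and keeping the (negative) cubic term as a safe underestimate.

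The only genuinely new ingredient beyond the already-proven inequalities is the bookkeeping that the ``output dimension'' in the statement is $d=|\mathcal{Y}|$ and that the entropies in question are precisely those of the columns $T(\cdot|x)$; everything else is a verbatim rerun of the redundancy computation. I would also note the minor point that the minimax identity $C(T)=R^*$ is usually stated for finite input alphabets, but the lower bound on $R^*$ we derive is uniform in the number of inputs (it only involves $S_{min},S_{max}$ and $d$), so it passes to channels with arbitrary, possibly infinite, input alphabet $\mathcal{X}$ by approximating $\max_x$ with a supremum; this is the one place where a short remark rather than a computation is needed.

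I do not anticipate a real obstacle: the proposition is essentially a corollary, and the ``hard part'' — establishing the quantitative lower bound $M(\Delta,d)\geq \tfrac{\Delta^2}{2N_d}+\tfrac{\Delta^3}{6N_d^2}$ and the monotonicity of $M$ — has already been done in Theorems~\ref{maintheoremineqpaper} and~\ref{propertiestheorem} and Remark~\ref{Mincreasingremark}. If anything, the subtlest step is justifying the reduction of $\max_\theta M(S(\rho^\theta)-S,d)$ to the two extreme entropies and then the min-max over $S$: one should be slightly careful that $M(\cdot,d)$ is convex with minimum at $0$, so on the interval $[S_{min}-S,\,S_{max}-S]$ its maximum is at one of the two endpoints, and the minimizing $S$ balances the two endpoint values — but since we only want a clean closed-form lower bound, we bypass the exact balance point and simply use the crude ``one of the two is at least half the gap'' estimate, exactly as in \eqref{afterassumingsmallerfordeltaleq123}.
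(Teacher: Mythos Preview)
Your proposal is correct and follows essentially the same route as the paper: the proposition is stated there as an immediate corollary of the redundancy chain \eqref{asdfqwerasdfyxcv}--\eqref{afterassumingsmallerfordeltaleq123} together with the Gallager--Ryabko identity $C(T)=R^*$, and you reproduce exactly that argument, including the ``one of the two gaps is at least half'' step for \eqref{afterassumingsmallerfordeltaleq123}. Your extra remark about infinite input alphabets is a harmless addendum not present in the paper but consistent with it.
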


Again, if Conjecture \ref{conjecture} holds we would have the stronger bound $C(T)\geq M((S_{min}-S_{max})/2,|{\mathcal Y}|)$. This bound or the bound from Proposition \ref{capacityproposition} are easier to evaluate than Shannon's mutual information formula for the exact $C(T)$ \cite{shannonmathemtheorycommun,coverthomas}. When one bounds the relative entropies in (\ref{minimalredundancyeqn}) from below by Pinsker's or the H-O-T inequality (\ref{pinskerinequality}) \cite{pinskersinequality,HOT81,refinementspinskier,audenaerteisertoptimal}, one would obtain a linear program in the variables $\sigma$. Also, Proposition \ref{capacityproposition} provides a more systematic way to obtain lower bounds on $C(T)$ than by plugging trial input distributions into Shannon's formula. 

On the other hand, the lower bound (\ref{lowerboundshannoncapacityeqn}) will be trivial iff all columns $T(\cdot|x)$ have the same entropy, whereas the capacity $C(T)$ vanishes only iff all columns are themselves identical. Also, the lower bound in (\ref{lowerboundshannoncapacityeqn}) can never exceed $(\log2)=1\,\text{bit}$, since it has to hold for input dimension $|{\mathcal X}|=2$ as well (or when there are only two distinct columns in $T(\cdot|x)$); in the most favorable case $S_{max}-S_{min}=\log d$, the RHS of (\ref{lowerboundshannoncapacityeqn}) is actually always between $0.111\simeq0.16\,\text{bit}$ (for $d=2$) and $\log\sqrt{3}\simeq0.80\,\text{bit}$ (for $d\to\infty$; cf.\ Remark \ref{taylorexpansionremark}).

\bigskip

In the quantum setting, identical formulas apply for the cost of the wrong code (\ref{lowerboundoncostofwrongcode}) and the redundancy (\ref{afterassumingsmallerfordeltaleq123}), see \cite{schumacherwestmorelandreview,indeterminatelengthQcoding}. Furthermore, the Holevo quantity, which is a lower bound on the classical capacity of a quantum channel \cite{nielsenchuang}, equals the relative entropy radius of the channel output, i.e.\ the redundancy (\ref{minimalredundancyeqn}) over all output states \cite{OPW97,schumacherwestmorelandreview}. For a quantum channel, however, there is no systematic way known in particular to find the \emph{minimum output entropy} $S_{min}$ efficiently; the channel output set has, e.g., generally infinitely many extreme points.

\subsubsection{Hypothesis testing and large deviations}
The relative entropy features prominently also in hypothesis testing and large deviation theory \cite{coverthomas}. On the one hand, relative entropies $D(\sigma\|\rho)$ between given states $\sigma$, $\rho$ appear for example as error exponents in asymmetric hypothesis testing (in the classical Chernoff-Stein Lemma \cite{coverthomas} as well as in its quantum analogue \cite{hiaipetz91,ogawanagaoka2000}), such that Theorems \ref{maintheoremineqpaper} and \ref{propertiestheorem} apply immediately to yield lower bounds on error decay rate in terms of the entropy difference $S(\sigma)-S(\rho)$ only.

On the other hand, in these areas one is often interested in quantities like
\begin{align}\label{lowerbounddistance}
{\rm{dist}}(E,\rho)~:=~\inf_{\sigma\in E}D(\sigma\|\rho)~,
\end{align}
where $E$ is some set of $d$-dimensional probability distributions and $\rho$ a fixed distribution. Sometimes the set $E$ is described by an entropy constraint, for example in universal coding for all $d$-dimensional sources of entropy less than $R$ (\cite{coverthomas}; similarly \cite{bjelakovic} for the quantum case):\ here, the decoding error probability vanishes exponentially in the message length $n$ like $\sim\exp(-n\,{\rm{dist}}(E,\rho))$ if the true source distribution is $\rho$ (assuming $S(\rho)<R$) and where $E:=\{\sigma|S(\sigma)>R\}$. The decay rate, ${\rm{dist}}(E,\rho)$, may thus be bounded from below by $M(R-S(\rho),d)$ according to Theorems \ref{maintheoremineqpaper} and \ref{propertiestheorem} simply in terms of an entropy difference.

Finally, in symmetric hypothesis testing between two classical (commuting) probability distributions $\rho_1$, $\rho_2$, the optimal error decay rate is given by the \emph{Chernoff information} $\xi(\rho_1,\rho_2)=-\log\min_{0\leq s\leq1}\tr{\rho_1^s\rho_2^{1-s}}$ \cite{classchernoff,coverthomas}, which has the property that there exists a distribution $\sigma$ (from the Hellinger arc between $\rho_1$ and $\rho_2$) satisfying $\xi(\rho_1,\rho_2)=D(\sigma\|\rho_1)=D(\sigma\|\rho_2)$. Similar to the derivation leading up to (\ref{afterassumingsmallerfordeltaleq123}), the latter quantity can be bounded from below in terms of the entropy difference $\Delta(\rho_1,\rho_2)=S(\rho_1)-S(\rho_2)$ between the two states only:
\begin{align}\label{lowerboundonchernoffinfoeqn}
\xi(\rho_1,\rho_2)~&\geq~\frac{|\Delta(\rho_1,\rho_2)|^2}{2\left(\log^2(d-1)+4\right)}\,-\,\frac{|\Delta(\rho_1,\rho_2)|^3}{3\left(\log^2(d-1)+4\right)^2}~,
\end{align}
where the last expression does not involve any extremization (cf.\ Theorem \ref{propertiestheorem}), and a better bound would follow from Conjecture \ref{conjecture} as above.

Whereas for symmetric hypothesis testing between (non-commuting) quantum states $\rho_1$, $\rho_2$ the basic formula for the decay rate $\xi(\rho_1,\rho_2)$ holds as well, the existence of a state $\sigma$ as above is not known \cite{quantumchernoff}. We can therefore not apply the same reasoning to get a lower bound on $\xi(\rho_1,\rho_2)$ in the quantum setting. For other kinds of (dimension-independent) bounds on the quantum and classical Chernoff information, see \cite{quantumchernoff,quantumsandwichbounds}.

\subsubsection{Mutual information}\label{mutualinformationsubsection}
Let $\rho_{AB}$ be a joint state on a bipartite system $AB$ with respective local dimensions $d_A$ and $d_B$ and total dimension $d=d_Ad_B$ (in the classical probabilistic case, $\rho_{AB}$ is a joint probability distribution of two random variables $A$ and $B$ with $d_A$ and $d_B$ outcomes, respectively). Then its \emph{mutual information} $I(A:B):=S(\rho_A)+S(\rho_A)-S(\rho_{AB})$ can be written as both a relative entropy and an entropy difference \cite{ohyapetz}:
\begin{align}
I(A:B)~=&~S(\rho_A\otimes\rho_B)-S(\rho_{AB})~=~-\Delta(\rho_{AB},\rho_A\otimes\rho_B)\\
=&~D(\rho_{AB}\|\rho_A\otimes\rho_B)~,
\end{align}
where $\rho_A$, $\rho_B$ denote the reduced states (marginal probability distributions) for $A$ and $B$, and in the first line we used the notation (\ref{definenotationDelta}).

Here we just remark that Theorem \ref{maintheoremineqpaper}, which relates relative entropy and entropy difference, does not give any constraints in this situation:\ for $\Delta\in[-\log d,0]$, which is the case here, it is $-\Delta\geq M(\Delta,d)$ by Remark \ref{Mincreasingremark}, with strict inequality except for $\Delta=-\log d,0$; the fact $D(\rho_{AB}\|\rho_A\otimes\rho_B)=-\Delta(\rho_{AB}\|\rho_A\otimes\rho_B)$ is thus consistent with Theorem \ref{maintheoremineqpaper} and therefore (\ref{ineqalityinmaintheorem}) does not give new information.

Note that $I(A:B)\leq\min\{\log d_A,\log d_B\}$ in the classical case, whereas for quantum states $I(A:B)\leq2\min\{\log d_A,\log d_B\}$, so that the maximum value $\log d=\log d_A+\log d_B$ of $-\Delta(\rho_{AB},\rho_A\otimes\rho_B)$ and of $D(\rho_{AB}\|\rho_A\otimes\rho_B)$ can be attained only in the quantum case and only when $d_A=d_B$ with a maximally entangled state $\rho_{AB}$ \cite{nielsenchuang}.

\section{Proofs}\label{proofsection}

\subsection{Proof of Theorem \ref{maintheoremineqpaper}}\label{proofofmainthmsubsect}

\begin{proof}[Proof of Theorem \ref{maintheoremineqpaper}]
To prove the inequality (\ref{ineqalityinmaintheorem}) and the optimality statement around (\ref{formofsigmaandrhoinDvsDeltaLemma}), we will compute, for any fixed $\Delta\in[-\log d,\log d]$, the infimum
\begin{align}\label{stateinfimum}
\inf_{\sigma,\rho}\left\{\,D(\sigma\|\rho)\,\,\big|\,\,S(\sigma)-S(\rho)=\Delta\,\right\}
\end{align}
over $d$-dimensional quantum states $\sigma$, $\rho$, and show that it equals $M(\Delta,d)$ from Eq.\ (\ref{definefunctionM}) with optimal states $\sigma$, $\rho$ of the form (\ref{formofsigmaandrhoinDvsDeltaLemma}).

We first make some basic observations about the infimum (\ref{stateinfimum}) including the fact that it is always attained. For $\Delta=\log d$, one necessarily has $\sigma=\1/d$ and $\rho$ is a pure state, so that $D(\sigma\|\rho)=\infty$ is attained; on the other hand, this equals $M(\log d,d)=\infty$, as $\Delta=\log d$ in (\ref{definefunctionM}) enforces $s=(d-1)/d$ and $r=0$; the case $\Delta=\log d$ is thus done and we exclude it from all further considerations. For any fixed $\Delta\in[-\log d,\log d)$ there exists a full-rank state $\rho$ with $S(\sigma)-S(\rho)=\Delta$, such that the infimum (\ref{stateinfimum}) is finite. As the set of pairs $(\sigma,\rho)$ satisfying $S(\sigma)-S(\rho)=\Delta$ is compact and the function $(\sigma,\rho)\mapsto D(\sigma\|\rho)$ is lower semicontinuous, the infimum is attained. For similar reasons, the infimum in (\ref{definefunctionM}) is attained. For the argumentation below, we note further that $\binH(s)+s\log(d-1)$ is strictly increasing in $s\in[0,(d-1)/d]$ from the value $0$ at $s=0$ to $\log d$ at $s=(d-1)/d$ with first derivative
\begin{align}\label{paradigmaticallydifferentiateG2}
\frac{d}{ds}\left(\binH(s)+s\log(d-1)\right)~=~\log\left(\frac{1-s}{s}(d-1)\right)\qquad\text{for}~\,s\in(0,1)~.
\end{align}

 \bigskip

It is easy to see that the infimum in (\ref{stateinfimum}) is attained for \emph{commuting} states $\sigma$ and $\rho$:\ fixing the state $\rho$ and fixing all eigenvalues ${\rm spec}(\sigma)$ of $\sigma$ (which also fixes the entropy $S(\sigma)$; this should be done to be consistent with $S(\sigma)-S(\rho)=\Delta$ for the fixed $\Delta$), the infimum (over $\sigma$) of the relative entropy
\begin{align}\label{writerelentropyasenergy}
D(\sigma\|\rho)~=~-S(\sigma)\,+\,\tr{(-\log \rho)\sigma}
\end{align}
is attained by the state $\sigma$ which is diagonal in the same basis as $(-\log\rho)$ and has its eigenvalues ordered in the opposite way as $(-\log\rho)$ \cite{bhatiabook}; as the logarithm is a strictly increasing function, $\sigma$ will thus also be diagonal in the same basis as $\rho$ (and in particular commute with $\rho$), with its eigenvalues ordered in the same way as $\rho$. (When ${\rm rank}(\rho)<{\rm rank}({\rm spec}[\sigma])$, the infimum is $+\infty$, and this as well can be attained by a $\sigma$ commuting with $\rho$). This commutativity carries over to the infimum in (\ref{stateinfimum}), and implies that the bound we are about to prove will be optimal for the case of classical $d$-dimensional probability distributions (i.e.\ diagonal density matrices) as well.

\bigskip

One can get more information about the optimal pair $(\sigma,\rho)$ from Klein's inequality, i.e.\ the nonnegativity of the relative entropy. We fix again the state $\rho$ and fix the entropy of $\sigma$ to equal $S(\sigma)=S$, leaving the spectrum of $\sigma$ otherwise free; under these constraints we again minimize (\ref{writerelentropyasenergy}). In thermodynamics language (see Eq.\ (\ref{thermalstateinvariancesubsecteqn}) and below), this is the minimization of the ``energy'' of $\sigma$ w.r.t.~the ``Hamiltonian'' $(-\log \rho)$ under the entropy constraint $S(\sigma)=S$; by the ``thermodynamic inequality'' (e.g.\ \cite{ohyapetz}), a version of Klein's inequality, it is well-known that the minimum is attained for a ``thermal state'' $\sigma\sim e^{-\gamma(-\log\rho)}$, i.e.~$\sigma=\rho^\gamma/\tr{\rho^\gamma}$, for some ``inverse temperature'' $\gamma\in[0,+\infty]$ (here we define $0^0:=0$, and $\rho^\infty/\tr{\rho^\infty}$ is to be understood as the maximally mixed state on the eigenspace of $\rho$ corresponding to its largest eigenvalue).

Making this argument more precise requires some care. We consider the minimization of (\ref{writerelentropyasenergy}) under variation of \emph{both} $\sigma$ and $\rho$ with the constraints of fixed $S(\rho)=S_\rho$ and fixed $S(\sigma)=S_\sigma=S_\rho+\Delta$, and denote by $(\widehat{\sigma},\widehat{\rho})$ a minimizing assignment. Only in the case $S_\rho=0$ we can have $D(\widehat{\sigma}\|\widehat{\rho})=+\infty$, and we do not consider this case here as it is only necessary if $\Delta=\log d$, which was already discussed above. Thus $D(\widehat{\sigma}\|\widehat{\rho})<\infty$, and so we have $\supp[\widehat{\sigma}]\subseteq\supp[\widehat{\rho}]$, which implies $\log{\rm rank}(\widehat{\rho})\geq S_\sigma$. Now, if $S_\sigma=\log{\rm rank}(\widehat{\rho})$, then obviously $\widehat{\sigma}=\widehat{\rho}^0/\tr{\widehat{\rho}^0}$ (i.e.~$\widehat{\sigma}$ is the maximally mixed state on the support of $\widehat{\rho}$; we define $0^0:=0$). Second, if $\log{\rm rank}(\widehat{\rho})> S_\sigma>\log m_0$, where $m_0$ denotes the dimension of the eigenspace of the largest eigenvalue of $\widehat{\rho}$ (i.e.~the dimension of the ground state space of the ``Hamiltonian'' $(-\log\widehat{\rho})$), then due to continuity of the entropy \cite{fannesinequalitypaper,audenaertfannes} there exists $\gamma\in(0,\infty)$ with $S(\widehat{\rho}^\gamma/\tr{\widehat{\rho}^\gamma})=S_\sigma$. We claim that then $\widehat{\sigma}=\widehat{\rho}^\gamma/\tr{\widehat{\rho}^\gamma}$ is the unique minimizer of (\ref{writerelentropyasenergy}) under variation of $\sigma$ (when keeping $\rho=\widehat{\rho}$ fixed). This is easy to see by verifying $\gamma\left(D(\sigma\|\widehat{\rho})-D(\widehat{\rho}^\gamma/\tr{\widehat{\rho}^\gamma}\|\widehat{\rho})\right)=D(\sigma\|\widehat{\rho}^\gamma/\tr{\widehat{\rho}^\gamma})$ for all states $\sigma$ with $S(\sigma)=S_\sigma$, and then using that $D(\sigma\|\widehat{\rho}^\gamma/\tr{\widehat{\rho}^\gamma})\geq0$ with equality iff $\sigma=\widehat{\rho}^\gamma/\tr{\widehat{\rho}^\gamma}$ (by Klein's inequality). Third, if $S_\sigma=\log m_0$, then the maximally mixed state on the eigenspace of the largest eigenvalue of $\widehat{\rho}$ is obviously the unique state with entropy $S_\sigma$ and minimizing (\ref{writerelentropyasenergy}), i.e.~we could formally write $\widehat{\sigma}=\widehat{\rho}^\infty/\tr{\widehat{\rho}^\infty}$. Fourth, if $S_\sigma<\log m_0$, then $\widehat{\sigma}$ may be any state supported on the eigenspace of the largest eigenvalue of $\widehat{\rho}$. In all of these case, $\widehat{\sigma}$ and $\widehat{\rho}$ commute, which was already seen above by simpler reasoning.

For the following we will thus write the minimizing assignment $(\widehat{\sigma},\widehat{\rho})$ from the previous paragraph as follows:
\begin{align}\label{commutingsigmanadrho}
\widehat{\sigma}~=~{\rm diag}(\widehat{q}_1,\ldots,\widehat{q}_d)\qquad\text{and}\qquad\widehat{\rho}~=~{\rm diag}(\widehat{p}_1,\ldots,\widehat{p}_d)~.
\end{align}
Now fixing $\sigma=\widehat{\sigma}$ in (\ref{writerelentropyasenergy}), the minimization over all commuting states $\rho={\rm diag}(p_1,\ldots,p_d)$ leads to the Lagrange function
\begin{align}\label{writelagrangefunction}
L(\{p_i\},\nu,\mu)~:=~\sum_i\left(\widehat{q}_i\log\widehat{q}_i-\widehat{q}_i\log p_i\right)\,+\,\nu\sum_ip_i\,+\,\mu\sum_ip_i\log p_i
\end{align}
with Lagrange multipliers $\nu$ and $\mu$ corresponding to the nomalization and entropy constraints $\tr{\rho}=1$ and $S(\rho)=S_\rho$, respectively. We now look at this as a function of those variables $p_i$, for which the corresponding element $\widehat{p}_i\neq0$ is positive (i.e.~which lie in the interior of the domain of $L$), and we fix the other elements $p_i$ to be zero. Then, since $p_i=\widehat{p}_i$ is a minimizing assignment, by the method of Lagrange multipliers we are guaranteed one of the following two things:\ either there exist $\widehat{\nu},\,\widehat{\mu}\in(-\infty,+\infty)$ such that
\begin{align}\label{lagrangeequationdifferentiated}
\left.\frac{dL}{dp_j}\right|_{\{\widehat{p}_i\},\widehat{\nu},\widehat{\mu}}~=~-\frac{\widehat{q}_j}{\widehat{p}_j}\,+\,(\widehat{\nu}+\widehat{\mu})\,+\,\widehat{\mu}\log\widehat{p}_j~=~0\qquad\forall j~\,\text{with}~\,\widehat{p}_j\neq0~;
\end{align}
or the gradients of the constraints $\tr{\rho}$ and $S(\rho)$ in (\ref{writelagrangefunction}) are linearly dependent at $\rho=\widehat{\rho}$, i.e.
\begin{align}\label{linearlydependentgradients}
\Bigg\{{\rm grad}_{\{j:\,\widehat{p}_j\neq0\}}\Big(\sum_ip_i\Big)\Bigg|_{\{\widehat{p}_i\}},~{\rm grad}_{\{j:\,\widehat{p}_j\neq0\}}\Big(\sum_ip_i\log p_i\Big)\Bigg|_{\{\widehat{p}_i\}}\Bigg\}\quad\text{is linearly dependent}.
\end{align}
We will now examine (potential) minimizing assignments satisfying (\ref{lagrangeequationdifferentiated}), and at the end of this proof we will show that the solutions of (\ref{linearlydependentgradients}) yield no (new) minimizers.

Eq.\ (\ref{lagrangeequationdifferentiated}) does not allow the fourth case from the paragraph before Eq.\ (\ref{commutingsigmanadrho}) as a minimizing assignment, since in this case there are $\widehat{p}_j=\widehat{p}_k=\lambda_{max}(\widehat{\rho})>0$ and $\widehat{q}_j\neq\widehat{q}_k$, contradicting (\ref{lagrangeequationdifferentiated}). Within the third case of the same paragraph, it excludes the possibility that, apart from the maximum eigenvalue $\lambda_{max}(\widehat{\rho})$, there could be two further distinct non-zero eigenvalues $\widehat{p}_i\neq\widehat{p}_j$, as in the third case both of these would have corresponding $\widehat{q}_i=\widehat{q}_j=0$, again contradicting (\ref{lagrangeequationdifferentiated}). Thus, in the third case above, $\widehat{\rho}$ has at most two distinct non-zero eigenvalues, as does $\widehat{\sigma}$.

Also for the first and second cases in the paragraph before Eq.\ (\ref{commutingsigmanadrho}) we now want to show that, except possibly when $\gamma=1$ (i.e.\ for $\widehat{\sigma}=\widehat{\rho}$ or $\Delta=0$), Eq.\ (\ref{lagrangeequationdifferentiated}) allows $\widehat{\rho}$ to have at most two distinct non-zero eigenvalues, and $\widehat{\sigma}$ as well. In these two cases, we have $\widehat{q}_j=\widehat{p}_j^\gamma/Z$ for some $\gamma\in[0,\infty)$ with $Z:=\sum_i\widehat{p}_i^\gamma>0$. Now define $x_j:=Z\widehat{q}_j/\widehat{p}_j=\widehat{p}_j^{\gamma-1}$ for each $j$ with $\widehat{p}_j>0$. Eq.~(\ref{lagrangeequationdifferentiated}) says then that, for $\gamma\neq1$, the points $x_j$ lie at intersections of the non-horizontal affine function $-x/Z+(\widehat{\nu}+\widehat{\mu})$ with the function $-(\widehat{\mu}/(\gamma-1))\log x$ (both are functions of $x>0$). The latter function is either strictly convex or strictly concave or constant (depending on whether the prefactor is negative or positive or zero). The two functions can thus not intersect at more than 2 distinct points $x_j>0$. When $\gamma\neq1$, there can therefore be at most 2 distinct non-zero values of $x_j$, i.e.\ also at most 2 distinct non-zero values of $\widehat{p}_j$ and of $\widehat{q}_j$.

\bigskip

Summing up so far, any states $\rho$ and $\sigma$ attaining the infimum in (\ref{stateinfimum}) commute and, for $\Delta\neq0$ and when they satisfy Eq.\ (\ref{lagrangeequationdifferentiated}), have at most two distinct non-zero eigenvalues each, in such a way that distinct eigenvalues in $\sigma$ and in $\rho$ correspond to each other. More precisely,
\begin{align}\label{diagstateswith2entries}
\begin{split}
\sigma~=&~{\rm diag}\left(\frac{1-s}{m},\ldots,\frac{1-s}{m},\frac{s}{n},\ldots,\frac{s}{n},0,\ldots,0\right)~,\\
\rho~=&~{\rm diag}\left(\frac{1-r}{m},\ldots,\frac{1-r}{m},\frac{r}{n},\ldots,\frac{r}{n},0,\ldots,0\right)~,
\end{split}
\end{align}
where $m,n\geq1$, $m+n\leq d$ and $s,r\in[0,1]$. Permuting the entries of both states simultaneously, we may assume the entries of $\sigma$ to be ordered non-increasingly, i.e.~$(1-s)/m\geq s/n$. The above analysis showed further that the diagonal entries of a minimizing pair are ordered in the same order (see below Eq.\ (\ref{writerelentropyasenergy}); this can also be seen by the fact that the inverse temperature $\gamma$ above turned out to be always non-negative). Thus, $(1-r)/m\geq r/n$ as well, and we will therefore in the following always assume $0\leq s,r\leq n/(m+n)$. Even in the case $\Delta=0$, some of the minimizing pairs $(\sigma,\rho)$ have this form (choose any $m$, $n$, and $s=r$), and we thus assume this form below; similarly for the case $\Delta=\log d$, which is achieved by $m=1$, $n=d-1$, $s=(d-1)/d$, $r=0$. We can thus continue the optimization in (\ref{stateinfimum}) with states of the form (\ref{diagstateswith2entries}). Before that, note for the states in (\ref{diagstateswith2entries}):
\begin{align}
&S(\sigma)~=~\binH(s)+(1-s)\log m+s\log n~,\quad S(\rho)~=~\binH(r)+(1-r)\log m +r\log n~,\label{entropyforstatesofspecialform}\\
&\Delta(\sigma,\rho)~=~S(\sigma)-S(\rho)~=~\binH(s)-\binH(r)+(s-r)\log\frac{n}{m}~,\label{entropydifferenceforstateswitheroes}\\
&D(\sigma\|\rho)~=~\binrel(s\|r)~=~s\log\frac{s}{r}+(1-s)\log\frac{1-s}{1-r}~.\label{relativeentropyforstatesofspecialform}
\end{align}

Given $\Delta\neq0$, let now the states $\sigma$ and $\rho$ in (\ref{diagstateswith2entries}), parametrized by $s$, $r$, $m$, and $n$, attain the infimum in (\ref{stateinfimum}). Our next goal is to show $m=1$ and $n=d-1$. For now, we will denote by $\tau_{t,m,n}$ the state parametrized by $t$, $m$, and $n$, such that, for example, $\tau_{s,m,n}=\sigma$ and $\tau_{r,m,n}=\rho$ in (\ref{diagstateswith2entries}). Assume that there exist $m',n'\geq1$ with $m'+n'\leq d$ and $n'/m'>n/m$. We will then show that there exists some $s'$ such that the pair of states $(\tau_{s',m',n'},\tau_{r,m',n'})$ would achieve a strictly lower value in (\ref{stateinfimum}) than the pair $(\sigma,\rho)$. For this, compute
\begin{align}
S(\tau_{s,m',n'})-S(\tau_{r,m',n'})~=~\binH(s)-\binH(r)+(s-r)\log\frac{n'}{m'}~=~\Delta+(s-r)\log\frac{n'/m'}{n/m}~,\label{expressiontoshowdminus1}
\end{align}
and note the the last logarithm is positive due to $n'/m'>n/m$. Now, assume first $\Delta>0$. Then, from (\ref{entropydifferenceforstateswitheroes}), we have $s>r$ due to our convention $s,r\leq n/(m+n)$. Thus the expression (\ref{expressiontoshowdminus1}) is strictly larger than $\Delta$, and because its left-hand-side is an increasing function of the argument $s\leq n/(m+n)$ (similar to the computation (\ref{paradigmaticallydifferentiateG2})), there exists due to continuity some $s'\in(r,s)$ with
\begin{align}\label{differenceistauinproofofthm1}
S(\tau_{s',m',n'})-S(\tau_{r,m',n'})~=~\Delta~.
\end{align}
Since $\binrel(s\|r)$ is strictly increasing in its first argument for $s\geq r$ (using $r>0$, which holds due to $\Delta<\log d$), we have $D_2(s'\|r)<D_2(s\|r)$, which contradicts the optimality of the pair $(\sigma,\rho)$. In the case $\Delta<0$, it is $s<r$ (again using the convention $s,r\leq n/(m+n)$) and (\ref{expressiontoshowdminus1}) is thus strictly smaller than $\Delta$. Therefore, we can find $r'\in(s,r)$ with $S(\tau_{s,m',n'})-S(\tau_{r',m',n'})=\Delta$, and it is now $D_2(s\|r')<D_2(s\|r)$ (irrespective of the value of $s$). We have thus shown that, if we choose the parametrization of the optimal pair in (\ref{diagstateswith2entries}) such that $s\leq n/(m+n)$, then there do \emph{not} exist $m',n'\geq1$ with $m'+n'\leq d$ and $n'/m'>n/m$. This implies $n=d-1$, $m=1$ for the optimal pair $(\sigma,\rho)$.

Using now $n=d-1$, $m=1$ in (\ref{diagstateswith2entries}) and recalling (\ref{entropyforstatesofspecialform})--(\ref{relativeentropyforstatesofspecialform}), the optimal states (for $\Delta\neq0$ and satisfying Eq.\ (\ref{lagrangeequationdifferentiated})) will thus be of the form (\ref{formofsigmaandrhoinDvsDeltaLemma}), where $(s,r)$ attains the minimum in (\ref{definefunctionM}); for $\Delta=0$, the optimal states can be chosen to be of that form.

\bigskip

So far, we have examined the (potentially) optimal states $(\widehat{\sigma},\widehat{\rho})$ satisfying Eq.\ (\ref{lagrangeequationdifferentiated}). We now show that the solutions of Eq.\ (\ref{linearlydependentgradients}) do not yield any new optimizing assignments. Condition (\ref{linearlydependentgradients}) holds iff there exists $\widehat{\lambda}\in(-\infty,+\infty)$ such that
\begin{align}
1+\log\widehat{p}_j~=~\frac{d}{dp_j}\left(\sum_ip_i\log p_i\right)\Bigg|_{\{\widehat{p}_i\}}~=~\widehat{\lambda}\,\frac{d}{dp_j}\left(\sum_ip_i\right)\Bigg|_{\{\widehat{p}_i\}}~=~\widehat{\lambda}\qquad\forall j~\text{with}~\,\widehat{p}_j\neq0~.
\end{align}
This holds iff $\log\widehat{p}_j=\log\widehat{p}_k$ whenever $\widehat{p}_j,\widehat{p}_k\neq0$, i.e.\ it holds exactly iff $\widehat{\rho}$ is completely mixed on its support. When $\supp[\widehat{\sigma}]\not\subseteq\supp[\widehat{\rho}]$, then $D(\widehat{\sigma}\|\widehat{\rho})=\infty$, and this is not a minimizing assignment except when $S(\widehat{\sigma})-S(\widehat{\rho})=\log d$, which is however already contained in the solutions (\ref{diagstateswith2entries}) found above with $n=d-1$, $m=1$. On the other hand, when $\supp[\widehat{\sigma}]\subseteq\supp[\widehat{\rho}]$ and $\widehat{\rho}$ is completely mixed on its support, one can compute
\begin{align}\label{entropydiffminusrelentropy}
\Delta~=~S(\widehat{\sigma})-S(\widehat{\rho})~=~-D(\widehat{\sigma}\|\widehat{\rho})~=~S(\widehat{\sigma})-\log{\rm rank}(\widehat{\rho})~\leq~0~.
\end{align}
Thus, we always have $D(\widehat{\sigma}\|\widehat{\rho})=- \Delta\in[0,\log d]$ here. Among these solutions, the cases $\Delta=0$ and $\Delta=-\log d$ have been discussed above and are contained in (\ref{diagstateswith2entries}) with $n=d-1$, $m=1$. We will finally show that all other solutions of (\ref{entropydiffminusrelentropy}) (and thus of (\ref{linearlydependentgradients})) are not minimizers of the optimization problem (\ref{stateinfimum}) by showing that for any $\Delta\in(-\log d,0)$ one can find states $\sigma$, $\rho$ with $S(\sigma)-S(\rho)=\Delta$ and $D(\sigma\|\rho)<-\Delta$. For this, let $\sigma:=\ket{\psi}\bra{\psi}$ be any fixed pure state and let $\rho_\mu:=\mu\ii_d/d+(1-\mu)\sigma$ for $\mu\in[0,1]$ be convex mixtures of the maximally mixed state $\ii_d/d$ with $\sigma$. Similar to Remark \ref{Mincreasingremark}, let $\mu'\in(0,1)$ be such that $S(\sigma)-S(\rho_{\mu'})=\Delta$, and notice again that $\mu'<-\Delta/\log d$ due to strict concavity of the entropy:\ $\Delta=S(\sigma)-S(\rho_{\mu'})<S(\sigma)-(\mu'S(\ii_d/d)+(1-\mu')S(\sigma))=-\mu'\log d$. Defining $\rho:=\rho_{\mu'}$, convexity of the relative entropy then indeed gives:
\begin{align}
D(\sigma\|\rho)~\leq~\mu'D(\sigma\|\ii_d/d)+(1-\mu')D(\sigma\|\sigma)~<~(-\Delta/\log d)\,D(\ket{\psi}\bra{\psi}\,\|\,\ii_d/d)~=~-\Delta~.
\end{align}
One may notice that all these better pairs $(\sigma=\ket{\psi}\bra{\psi},\rho_\mu)$ here are contained in the solutions (\ref{diagstateswith2entries}) found above with $n=d-1$, $m=1$.

\bigskip

The preceding proof shows also that, for $\Delta\neq0$, the optimal states are necessarily of the form (\ref{formofsigmaandrhoinDvsDeltaLemma}), up to simultaneous unitary transformations of $\sigma$ and $\rho$; the proof in Section \ref{proofofpropertiessubsect} shows furthermore that, for each $\Delta\neq0$, the optimal $s$ and $r$ are unique. For $\Delta=0$, the optimal pairs are obviously exactly the ones with $\sigma=\rho$.
\end{proof}

\subsection{Proof of Theorem \ref{propertiestheorem}}\label{proofofpropertiessubsect}
\begin{proof}[Proof of Theorem \ref{propertiestheorem}]
$M(\Delta,d)\geq0$ is clear, and the stated values are argued below Eq.\ (\ref{definebinaryrelent}). For the convenient upper bounds on $N(d)$, see Lemma \ref{lemmareducingonevariableoptimization}.

For $N=N(d)$, the first inequality in (\ref{lowerboundeqninmaintheorem}) is just Lemma \ref{lemmathatreducesoptimizationovertwovariables}, and for $N\geq N(d)$ it follows from the monotonicity of the lower bound:
\begin{align}
\frac{d}{dN}\left(Ne^\frac{\Delta}{N}-N-\Delta\right)~=~-e^\frac{\Delta}{N}\left[e^{-\frac{\Delta}{N}}-\left(1-\frac{\Delta}{N}\right)\right]~\leq~0~,
\end{align}
since the square brackets is non-negative due to convexity of the exponential function. For any $N$ and $\Delta$, the second inequality in (\ref{lowerboundeqninmaintheorem}) is easily verified by subtracting both sides from each other and observing that the difference and its first three derivatives w.r.t.\ $\Delta$ vanish at $\Delta=0$, whereas the fourth derivative is positive everywhere. If one defines, as usual, the minimum over an empty set in (\ref{definefunctionM}) to be $\infty$, then the lower bounds (\ref{lowerboundeqninmaintheorem}) hold even for $\Delta$ outside the range $[-\log d,\log d]$.

To prove (\ref{quadraticlowerboundforallDelta}) for $d\geq3$, we use the rightmost bound in (\ref{lowerboundeqninmaintheorem}) with $N=\log^2d$ and show $\Delta^2/(2\log^2d)+\Delta^3/(6\log^4d)\geq\Delta^2/(3\log^2d)$ for $\Delta\in[-\log d,\log d]$; this inequality is easily seen to hold whenever $\log d\geq1$. For $d=2$ and $\Delta\in[-\log^22,\log2]$ the last inequality holds as well; for $d=2$ and $\Delta\in[-\log2,-\log^22]$ we use the left inequality in (\ref{lowerboundeqninmaintheorem}) with $N=0.45>N(2)$ and verify numerically (cf.\ also upper left panel in Fig.\ \ref{Mfigure}) that $Ne^{\Delta/N}-N-\Delta\geq\Delta^2/(3\log^22)$ holds in this range of $\Delta$, with the gap in the inequality being at last $0.005$ which is well above $0$ numerically.

\bigskip

We now sketch a proof of strict convexity (and continuous differentiability) of $M(\Delta,d)$, which is somewhat involved; see also the proof of Theorem 1 in \cite{refinementspinskier} for a related approach at optimal refinements of Pinsker's inequality. For our proof, we employ the definition (\ref{definefunctionM}), will somtimes abbreviate $D:=\log(d-1)\geq0$, and denote by $r_d$ the (unique) $r\in(0,1/2)$ attaining the maximum in (\ref{definerealN}), i.e.\ satisfying $(1-2r_d)\log\left(\frac{1-r_d}{r_d}(d-1)\right)=2$. We also define $\gamma_d\in(0,(d-1)/d)$ to be the unique solution of $(1-\gamma_d)\log\left(\frac{1-\gamma_d}{\gamma_d}(d-1)\right)=1$; one can check that $\gamma_d>r_d$.

If, for some $\Delta=x\in(-\log d,\log d)$, a pair $(s,r)\in(0,(d-1)/d)^2$ attains the minimum in (\ref{definefunctionM}), then by the method of Lagrange multipliers the following two equations hold:
\begin{align}
\Delta(s,r)~&:=\binH(s)-\binH(r)+(s-r)D~=~x~,\label{DeltaSRx}\\
F(s,r)~&:=\left(\log\frac{1-r}{r}-\log\frac{1-s}{s}\right)\left(D+\log\frac{1-r}{r}\right)-\left(\frac{s}{r}-\frac{1-s}{1-r}\right)\left(D+\log\frac{1-s}{s}\right)~=~0~,\label{FSR0}
\end{align}
where the latter equality expresses the requirement that the gradients of the target function and the constraint function be parallel (i.e., that the $2\times2$-matrix formed by these gradients have vanishing determinant). In a small enough neighborhood of any such pair $(s,r)\in(0,(d-1)/d)^2$ with $s\neq r$, the equations (\ref{DeltaSRx})--(\ref{FSR0}) are sufficiently well-behaved to have a unique solution $(s(x'),r(x'))$ for any $x'\in(x-\varepsilon,x+\varepsilon)$, as the solution of the differential equations obtained from (\ref{DeltaSRx})--(\ref{FSR0}). For any $s=r$, (\ref{DeltaSRx})--(\ref{FSR0}) are satisfied with $x=0$ (corresponding to the trivial optimality cases $\sigma=\rho$), but near any such point there are no other pairs with $F(s,r)=0$ and $s\neq r$ (as one sees from a quadratic expansion of $F(s,r)$) with the exception of $s=r=r_d$:\ around $x=0$ and $s=r=r_d$, the equations (\ref{DeltaSRx})--(\ref{FSR0}) have a solution with $\dot{s}(x=0)=(1-2r_d)/3$, $\dot{r}(x=0)=-(1-2r_d)/6$ (overdots denote derivatives w.r.t.\ $x$), which can be seen by computing the third directional derivatives of $F(s,r)$ at this point.

Examining the equation $F(s,r)=0$ for $(s,r)\in(0,(d-1)/d)^2$ (by way of discussing $F(s,r)$ and its derivative $F_s(s,r)$ along each fixed $r$) and furthermore considering optimal pairs $(s,r)$ for any $\Delta=x$ in (\ref{definefunctionM}) on the boundary of $[0,(d-1)/d]^2$, one finds the following:\ for $r=0$, optimal pairs are obtained for $s=0$ and for $s=(d-1)/d$ (where $x=\log d$); for $0<r<r_d$, optimal pairs are obtained for $s=r$ and for one other value $s\in(r_d,(d-1)/d)$ (where $0<x<\log d$); for $r=r_d$, the only optimal pair is obtained for $s=r_d$ (where $x=0$); for $r_d<r<\gamma_d$, optimal pairs are obtained for one value $s\in(0,r_d)$ (where $x\in(\Delta_r,0)$, where we define $\Delta_r:=\Delta(s=0,r=\gamma_d)=1-D+\log\gamma_d\in(-\log d,1-\log d)$) and for $s=r$; for $\gamma_d\leq r\leq(d-1)/d$, optimal pairs are obtained for $s=0$ (where $x\in[-\log d,\Delta_r]$) and for $s=r$.

Combining this with the above differentiability result and defining $s(0):=r(0):=r_d$ for $x=0$ while disregarding the other optimal pairs with $s=r$, we get the following:\ for any $x\in[-\log d,\log d]\setminus\{0\}$ there exists exactly one optimal pair $(s(x),r(x))$ (i.e.\ with $\Delta(s(x),r(x))=x$), the curve $(s(x),r(x))$ is continuous in $x\in[-\log d,\log d]$, and differentiable in $x\in(\Delta_r,\log d)$. Thus already, $M(x,d)=D(s(x)\|r(x))$ is continuous in $x\in[-\log d,\log d]$ (with the usual convention $\lim_{x\nearrow\log d}M(x,d)=\infty=M(\log d,d)$).

We can now finally prove strict convexity of $M(x,d)$.\ First, for $x\in[-\log d,\Delta_r]$, it is $s(x)=0$. One can thus explicitly write $\Delta=-\binH(r)-Dr$ as a function of $M=M(x,d)=\binrel(s=0\|r)=-\log(1-r)$ in this range of $\Delta=x$; the function $\Delta=\Delta(M)$ is easily seen to be continuously differentiable, strictly decreasing and strictly convex in this range. Its inverse $M=M(\Delta,d)$ is thus strictly convex as well and continuously differentiable in $\Delta\in(-\log d,\Delta_r]$, and one can compute $dM/d\Delta\left|_{\Delta=\Delta_r}\right.=-1$ (and $dM/d\Delta\left|_{\Delta\searrow-\log d}\right.=-\infty$).

Second, for $x\in(\Delta_r,\log d)$, the optimal pairs $(s(x),r(x))\in(0,(d-1)/d)^2$ satisfy (\ref{DeltaSRx})--(\ref{FSR0}). We can thus compute
\begin{align}
\frac{d}{dx}M(x,d)~&=~\frac{d}{dx}\binrel(s(x)\|r(x))\\
&=~\left(\log\frac{1-r(x)}{r(x)}-\log\frac{1-s(x)}{s(x)}\right)\dot{s}(x)-\left(\frac{s(x)}{r(x)}-\frac{1-s(x)}{1-r(x)}\right)\dot{r}(x)\\
&=~\left(\log\frac{1-r(x)}{r(x)}-\log\frac{1-s(x)}{s(x)}\right)\left(D+\log\frac{1-s(x)}{s(x)}\right)^{-1}~,\label{eqngetridoffirstderivatives}
\end{align}
where in the last step we used (\ref{FSR0}) and the derivative of (\ref{DeltaSRx}) w.r.t.\ $x$. Notice for later that $dM(x,d)/dx\left|_{x\searrow\Delta_r}\right.=-1$ since $s(x)\searrow0$ and $r(x)\to\gamma_r$ for $x\searrow\Delta_r$. Thus,
\begin{align}
\left(D+\log\frac{1-s(x)}{s(x)}\right)^2\,\frac{d^2}{dx^2}M(x,d)~=~&\left(D+\log\frac{1-r(x)}{r(x)}\right)\frac{\dot{s}(x)}{s(x)(1-s(x))}\nonumber\\
&-\left(D+\log\frac{1-s(x)}{s(x)}\right)\frac{\dot{r}(x)}{r(x)(1-r(x))}~.
\end{align}
Strict convexity, $d^2M(x,d)/dx^2>0$, would thus follow from $\dot{s}(x)\geq0$ and $\dot{r}(x)\leq0$; to see the last implication, note that not both of $\dot{s}(x)$ and $\dot{r}(x)$ can vanish simultaneously because of $d\Delta(s(x),r(x))/dx=1>0$. The last insight also shows that $\dot{s}(x)\leq0$ and $\dot{r}(x)\geq0$ cannot both be true simultaneously unless $\dot{s}(x)=\dot{r}(x)=0$. It thus suffices now to show that $\dot{s}(x)$ and $\dot{r}(x)$ cannot both be simultaneously positive nor both be simultaneously negative. For $x=0$, this was remarked above. For $x\in(\Delta_r,\log d)\setminus\{0\}$, we show it in the following way.

Differentiating (\ref{FSR0}), one has
\begin{align}\label{withtwopositivecoefficients}
0~=~\frac{d}{dx}F(s(x),r(x))~=~F_s(s(x),r(x))\,\dot{s}(x)\,+\,F_r(s(x),r(x))\,\dot{r}(x)~.
\end{align}
The considerations of the equation $F(s,r)=0$ above show that $F_s(s(x),r(x))>0$ for $s(x)\neq r(x)$. Finally, the fact that $s(x)>r(x)$ implies $r(x)<r_d$ and the fact that $s(x)<r(x)$ implies $r(x)>r_d$ (see above) can be used, together with (\ref{FSR0}), to show $F_r(s(x),r(x))>0$ for $s(x)\neq r(x)$. (\ref{withtwopositivecoefficients}) then implies that not both of $\dot{s}(x)$ and $\dot{r}(x)$ can have the same sign.

$M(x,d)$ is thus strictly convex in $x\in(\Delta_r,\log d)$, as well as in $x\in[-\log d,\Delta_r]$. Since $M(x,d)$ is continuous with matching left-sided and right-sided derivatives at $x=\Delta_r$ (see above), it is strictly convex in the whole range $x\in[-\log d,\log d]$. Continuity of $(s(x),r(x))$ and Eq.\ (\ref{eqngetridoffirstderivatives}), together with the above considerations of the range $x\in[-\log d,\Delta_r]$, finally prove continuous differentiability of $M(x,d)$ in $x\in(-\log d,\log d)$.\end{proof}

\subsection{Auxiliary Lemmas}\label{technicallemmassubsection}
\begin{lemma}[Simple lower bound on $M(\Delta,d)$]\label{lemmathatreducesoptimizationovertwovariables}
For $2\leq d<\infty$ and $\Delta\in[-\log d,\log d]$, the quantity $M(\Delta,d)$ from Eq.\ (\ref{definefunctionM}) is bounded from below as follows:
\begin{align}
M(\Delta,d)~&\geq~N(d)\left(e^\frac{\Delta}{N(d)}-1-\frac{\Delta}{N(d)}\right)~,
\end{align}
where $N(d)$ is defined in Eq.\ (\ref{definerealN}).
\end{lemma}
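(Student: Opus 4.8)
The plan is to turn the two-variable minimization (\ref{definefunctionM}) into a one-dimensional convexity estimate. Write $D:=\log(d-1)$ and $g(t):=\binH(t)+tD$ for $t\in[0,1]$; then $g$ is strictly concave on $(0,1)$ with $g'(t)=\log\bigl(\tfrac{1-t}{t}(d-1)\bigr)$ and $g''(t)=-1/(t(1-t))$, and (by Eq.\ (\ref{paradigmaticallydifferentiateG2})) $g$ is strictly increasing on $[0,(d-1)/d]$ from $g(0)=0$ to $g((d-1)/d)=\log d$. The constraint in (\ref{definefunctionM}) is exactly $g(s)-g(r)=\Delta$, and a direct differentiation gives $\partial_s\binrel(s\|r)=\log\tfrac{s(1-r)}{r(1-s)}=g'(r)-g'(s)$. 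Writing $W(\Delta):=N(d)\bigl(e^{\Delta/N(d)}-1-\Delta/N(d)\bigr)$, I would fix $r\in(0,(d-1)/d)$ and prove
\[
F(s)~:=~\binrel(s\|r)\,-\,W\bigl(g(s)-g(r)\bigr)~\ge~0\qquad\text{for }s\in(0,(d-1)/d)\,;
\]
this yields $\binrel(s\|r)\ge W(\Delta)$ for all such pairs and hence $M(\Delta,d)\ge W(\Delta)$, the remaining (boundary) pairs being either trivial ($\binrel(s\|r)=+\infty$ when $r=0<s$) or obtained by continuity of $\binrel$ and $g$ up to the boundary of $[0,(d-1)/d]^2$ (there is no singularity at $(d-1)/d<1$).

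The single ingredient that ties the bound to $N(d)$ is the estimate
\[
s(1-s)\bigl(g'(s)\bigr)^2~\le~N(d)\qquad\text{for all }s\in(0,(d-1)/d)\,,
\]
equivalently $-g''(s)-(g'(s))^2/N(d)\ge0$ there. To see it, note that $\varphi(s):=s(1-s)(g'(s))^2$ has, using $s(1-s)g''(s)=-1$, derivative $\varphi'(s)=g'(s)\bigl[(1-2s)g'(s)-2\bigr]$; on $(0,(d-1)/d)$ one has $g'(s)>0$, while $(1-2s)g'(s)$ decreases strictly from $+\infty$ at $s\to0^+$ to $0$ at $s=\tfrac12$ and is $\le0$ on $[\tfrac12,(d-1)/d)$. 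Hence $\varphi$ has a unique maximum on $(0,(d-1)/d)$ at a point $r_d\in(0,\tfrac12)$, so $\varphi(s)\le\varphi(r_d)=N(d)$ by the very definition (\ref{definerealN}). (This is essentially the content of Lemma \ref{lemmareducingonevariableoptimization}.)

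It remains to verify $F\ge0$. One has $F(r)=\binrel(r\|r)-W(0)=0$; and, abbreviating $\delta(s):=g(s)-g(r)$ with $\delta(r)=0$, $\delta'(s)=g'(s)$, $W'(\delta)=e^{\delta/N(d)}-1$, one gets
\[
F'(s)~=~\bigl(g'(r)-g'(s)\bigr)-\bigl(e^{\delta(s)/N(d)}-1\bigr)g'(s)~=~g'(r)-e^{\delta(s)/N(d)}g'(s)\,,
\]
so that $F'(r)=g'(r)-g'(r)=0$, and then
\[
F''(s)~=~-e^{\delta(s)/N(d)}\!\left(\frac{(g'(s))^2}{N(d)}+g''(s)\right)~=~\frac{e^{\delta(s)/N(d)}}{s(1-s)}\!\left(1-\frac{s(1-s)(g'(s))^2}{N(d)}\right)~\ge~0
\]
by the estimate above. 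Thus $F$ is convex on $(0,(d-1)/d)$ with a stationary point at $s=r$, whence $F\ge F(r)=0$, which is what we wanted.

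The only genuinely delicate point is choosing the right ``potential'' $F$; once it is chosen, the positivity of $F''$ reduces, by direct computation, precisely to the variational characterization (\ref{definerealN}) of $N(d)$ — which is exactly why $N(d)$, and not a larger constant, is the quantity that appears in the bound. A secondary (but routine) obstacle is the careful treatment of the boundary of $[0,(d-1)/d]^2$, which I would dispatch by the triviality of $r=0$ and a limiting argument from the interior for the other boundary pairs.
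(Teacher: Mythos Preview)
Your argument is correct and is essentially the paper's own proof: fix $r$, set $G(s,r)=\binrel(s\|r)-W(\Delta(s,r))$ (your $F$), observe $G(r,r)=0=\partial_sG(r,r)$, and show $\partial_s^2G\ge0$ reduces exactly to $s(1-s)\bigl(g'(s)\bigr)^2\le N(d)$. You are in fact slightly more careful than the paper in one respect: you explicitly verify that the maximum of $s(1-s)\bigl(g'(s)\bigr)^2$ over $(0,(d-1)/d)$ lies in $(0,1/2)$, so that it coincides with $N(d)$ as defined in (\ref{definerealN}); the paper simply invokes the definition of $N(d)$ at that step (and its displayed second derivative has $r$ where it should read $s$).
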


\begin{proof}Define the function $\Delta(s,r):=\binH(s)-\binH(r)+(s-r)\log(d-1)$. To show Lemma \ref{lemmathatreducesoptimizationovertwovariables}, we will prove
\begin{align}\label{derivativeinprooftoprove}
G(s,r)~:=~\binrel(s\|r)\,-\,N(d)\left(e^\frac{\Delta(s,r)}{N(d)}-1-\frac{\Delta(s,r)}{N(d)}\right)~\geq~0
\end{align}
for all $s,r\in[0,(d-1)/d]$. The statement is easily verified for $r=0$, since $\binrel(s\|0)=+\infty$ unless $s=0$. We thus fix $r\in(0,(d-1)/d]$ from now on, so that $G(s,r)$ is a function of $s\in[0,(d-1)/d]$.

At $s=r$, the function $G(s=r,r)=0$ vanishes, as does its first derivative
\begin{align}
\left.\frac{d}{ds}G(s,r)\right|_{s=r}~=~\left.\log\frac{1-r}{r}-\log\frac{1-s}{s}-\left(e^\frac{\Delta(s,r)}{N(d)}-1\right)\log\left(\frac{1-s}{s}(d-1)\right)\right|_{s=r}~=~0~.
\end{align}
Furthermore, $G(s,r)$ is convex in $s\in[0,(d-1)/d]$ since, for $s\in(0,(d-1)/d]$,
\begin{align}
\frac{d^2}{ds^2}G(s,r)~=~e^\frac{\Delta(s,r)}{N(d)}\frac{1}{N(d)\,s(1-s)}\left[N(d)\,-\,r(1-r)\left(\log\left(\frac{1-r}{r}(d-1)\right)\right)^2\right]~\geq~0
\end{align}
as the term in square brackets is non-negative due to the definition of $N(d)$ in Eq.\ (\ref{definerealN}).

All of this together shows that, for each fixed $r\in[0,(d-1)/d]$, $G(s,r)$ attains its minimum $0$ at $s=r$, which finally proves (\ref{derivativeinprooftoprove}).
\end{proof}

\begin{lemma}[Simple bounds on $N(d)$]\label{lemmareducingonevariableoptimization}
For $d\geq2$, the optimization $N(d)$ from Eq.\ (\ref{definerealN}) satisfies the following bounds:
\begin{align}
N_d-1~=~\frac{1}{4}\log^2(d-1)~<~N(d)~&<~N_d~=~\frac{1}{4}\log^2(d-1)+1~,\label{strongerboundintechnicallemma}\\
N(d)~&<~\log^2d~,\label{log2dinequalityforNd}
\end{align}
where $N_d$ in the first inequality was defined in Eq.\ (\ref{defineapproxN}).
\end{lemma}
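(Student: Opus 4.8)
The plan is to remove the constraint in the definition (\ref{definerealN}) of $N(d)$ by the substitution $v:=\log\frac{1-r}{r}$, which maps $(0,1/2)$ bijectively and smoothly onto $(0,\infty)$. A short computation gives $r(1-r)=1/\bigl(4\cosh^2(v/2)\bigr)$, so that, writing $D:=\log(d-1)\ge0$,
\[
N(d)~=~\max_{v\ge0}~\frac{(v+D)^2}{4\cosh^2(v/2)}~,
\]
the maximum being attained because the expression equals $D^2/4$ at $v=0$ and tends to $0$ as $v\to\infty$. All three claimed bounds will follow from elementary one-variable estimates on this expression.

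For the upper bound $N(d)<N_d=\tfrac14 D^2+1$ I would prove $(v+D)^2<(D^2+4)\cosh^2(v/2)$ for every $v\ge0$. From the (termwise nonnegative) power series one has $\cosh(v/2)\ge1+v^2/8$, hence $\cosh^2(v/2)\ge1+v^2/4$, strictly for $v\ne0$; and expanding, $(D^2+4)(1+v^2/4)-(v+D)^2=(2-Dv/2)^2\ge0$. Chaining, $(v+D)^2\le(D^2+4)(1+v^2/4)\le(D^2+4)\cosh^2(v/2)$, where for every $v\ge0$ at least one of the two steps is strict (the first fails only if $Dv=4$, the second only if $v=0$, and these cannot occur together). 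Dividing by $4\cosh^2(v/2)$ and taking the maximum gives $N(d)<(D^2+4)/4=N_d$.

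For the lower bound $N(d)>N_d-1=\tfrac14 D^2$: if $d=2$ then $D=0$ and $N(2)=\max_{v\ge0}v^2/\bigl(4\cosh^2(v/2)\bigr)>0$ trivially; if $d\ge3$ then $D>0$, and the derivative of $v\mapsto(v+D)^2/\cosh^2(v/2)$ at $v=0$ equals $2D>0$, so this function exceeds its value $D^2$ at $v=0$ for small $v>0$, whence $N(d)>D^2/4$. (Equivalently, $f(r):=r(1-r)\bigl(\log\frac{1-r}{r}+D\bigr)^2$ has $f(1/2)=D^2/4$ and $f'(1/2)=-2D<0$.) This proves (\ref{strongerboundintechnicallemma}). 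For (\ref{log2dinequalityforNd}) with $d\ge3$ it then suffices to check the purely elementary inequality $N_d<\log^2 d$, i.e.\ $\psi(d):=\log^2 d-\tfrac14\log^2(d-1)-1>0$: this holds at $d=3$ by direct evaluation, and $\psi'(x)=\frac{2\log x}{x}-\frac{\log(x-1)}{2(x-1)}>0$ for $x\ge3$ (since $\log x>\log(x-1)>0$ and $\tfrac2x>\tfrac1{2(x-1)}$), so $\psi(d)\ge\psi(3)>0$ for all integers $d\ge3$; together with $N(d)<N_d$ this gives $N(d)<\log^2 d$.

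It remains to settle (\ref{log2dinequalityforNd}) for $d=2$, which I expect to be the one genuinely delicate point, since $N(2)\approx0.439$ and $\log^2 2\approx0.481$ leave only a thin margin that a crude Taylor bound on $\cosh$ does not capture. Here one can argue precisely as follows: for $d=2$ the maximizer $w:=v/2$ satisfies $w\tanh w=1$ (hence $w>1$), and then $\cosh^2 w=w^2/(w^2-1)$, so $N(2)=w^2-1$; since $w\mapsto w\tanh w$ is strictly increasing and exceeds $1$ already at $w_0:=\sqrt{1+\log^2 2}$ (a one-line numerical check), we get $w<w_0$ and therefore $N(2)=w^2-1<w_0^2-1=\log^2 2$. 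Apart from this endpoint, the whole lemma reduces to routine calculus once the $\cosh$-substitution is in place; the only other care needed is in tracking the strictness of the inequalities in the upper bound.
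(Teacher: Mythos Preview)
Your proof is correct and takes a genuinely different route from the paper's. The key move is the substitution $v=\log\frac{1-r}{r}$, which recasts $N(d)$ as $\max_{v\ge0}(v+D)^2/(4\cosh^2(v/2))$. For the upper bound $N(d)<N_d$ you then only need the elementary estimate $\cosh^2(v/2)\ge1+v^2/4$ together with the identity $(D^2+4)(1+v^2/4)-(v+D)^2=(2-Dv/2)^2$; this replaces the paper's argument, which expands the inequality as a quadratic in $\log(d-1)$, completes the square to reduce to a $d$-independent claim $\phi(r)>0$, and then verifies $\phi(r)>0$ by a fourth-derivative computation. Your approach is cleaner here. For the lower bound both proofs use the same idea (evaluate at, or perturb away from, $r=1/2$). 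For (\ref{log2dinequalityforNd}) the paper covers $d\ge4$ by the crude estimate $\tfrac14\log^2(d-1)+1\le\log^2 d$ (valid once $\log^2 d\ge4/3$) and leaves $d=2,3$ to numerical verification; you instead push the analytic range down to $d\ge3$ via the monotonicity of $\psi(x)=\log^2x-\tfrac14\log^2(x-1)-1$, and for $d=2$ you derive the closed form $N(2)=w^2-1$ with $w\tanh w=1$, reducing the required numerical check to the single inequality $w_0\tanh w_0>1$ at $w_0=\sqrt{1+\log^22}$. Both proofs ultimately appeal to a numerical verification at $d=2$; yours isolates exactly what needs to be checked.

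Two small points of presentation: the bound $\cosh^2(v/2)\ge1+v^2/4$ follows in one line from $\cosh^2x=1+\sinh^2x$ and $\sinh x\ge x$ for $x\ge0$, which is slightly quicker than squaring the power-series estimate; and your strictness argument for the upper bound could simply note that for $v>0$ the $\cosh$ inequality is already strict, while at $v=0$ one has $D^2<D^2+4$ directly.
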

\begin{proof}To prove the upper bound in (\ref{strongerboundintechnicallemma}), we show that for all $r\in[0,1]$,
\begin{align}\label{needlhsinproofofsimpleubforNd}
0~<~\frac{1}{4}\log^2(d-1)+1\,-\,r(1-r)\left(\log\left(\frac{1-r}{r}(d-1)\right)\right)^2~.
\end{align}
For $r=0,1$ this is clear due to the convention $0\cdot\infty=0$ (or by continuity), and for $r=1/2$ it is easily verified. Let thus $r\in(0,1)\setminus\{1/2\}$. The right-hand-side of (\ref{needlhsinproofofsimpleubforNd}) equals
\begin{align}
&=~\left(\frac{1}{2}-r\right)^2\log^2(d-1)-2r(1-r)\left(\log\frac{1-r}{r}\right)\log(d-1)+1-r(1-r)\left(\log\frac{1-r}{r}\right)^2\nonumber\\
&=~\left(\left(\frac{1}{2}-r\right)\log(d-1)-\frac{r(1-r)}{\frac{1}{2}-r}\log\frac{1-r}{r}\right)^2+1-\left[r(1-r)+\frac{r^2(1-r)^2}{\left(\frac{1}{2}-r\right)^2}\right]\left(\log\frac{1-r}{r}\right)^2\nonumber\\
&\geq~\frac{1}{(1-2r)^2}\left[(1-2r)^2-r(1-r)\left(\log\frac{1-r}{r}\right)^2\right]~=:~\frac{\phi(r)}{(1-2r)^2}~,\label{showsquarebracketspositive}
\end{align}
where the inequality arises by omitting the non-negative first term $\left(\ldots\right)^2$ from the step before.

Now, the last expression does not depend on the dimension $d$ anymore, and one can show that it is positive for all $r\in(0,1)\setminus\{1/2\}$. This is numerically easily verified, or analytically in the following way:\ the term in square brackets in (\ref{showsquarebracketspositive}) vanishes at $r=1/2$, as do its first three derivatives w.r.t.\ $r$, whereas its fourth derivative
\begin{align}
\frac{d^4}{dr^4}\,\phi(r)~=~\frac{8}{r^2(1-r)^2}+\frac{2(1-2r)^2}{r^3(1-r)^3}+(1-2r)\left(\log\frac{1-r}{r}\right)\frac{16r(1-r)+4(1-2r)^2}{r^3(1-r)^3}\nonumber
\end{align}
is strictly positive for all $r\in(0,1)$, since $(1-2r)\log\frac{1-r}{r}\geq0$ for $r\in(0,1)$.

The lower bound in (\ref{strongerboundintechnicallemma}) follows by letting $r\to1/2$ in the definition (\ref{definerealN}) of $N(d)$.

In the range $d\geq4>e^{\sqrt{4/3}}\approx3.2$, the bound (\ref{log2dinequalityforNd}) follows from (\ref{strongerboundintechnicallemma}) due to $\frac{1}{4}\log^2(d-1)+1\leq\frac{1}{4}\log^2d+\frac{3}{4}\cdot\frac{4}{3}\leq\frac{1}{4}\log^2d+\frac{3}{4}\log^2d=\log^2d$. For $d=2,3$ the claim can be verified numerically (cf.\ also the lower right panel of Fig.\ \ref{Mfigure}).
\end{proof}

\subsection{Proof of Theorem \ref{maxvarianceinfothm}}\label{proofofmaxvariancesubsect}
\begin{proof}[Proof of Theorem \ref{maxvarianceinfothm}]
For fixed $d\geq2$, we maximize the expression on the LHS of (\ref{firstineqformaxvarianceinfo}) or (\ref{expressvariationofinformationwithpi}) over all probability distributions $\{p_i\}$ (i.e., spectra of $\rho$), which leads to the Lagrange function
\begin{align}\label{lagrangefunctionforvariance}
L(\{p_i\},\nu)~:=~\sum_ip_i(\log p_i)^2-\left(\sum_ip_i\log p_i\right)^2\,+\,\nu\sum_ip_i~,
\end{align}
with the Lagrange multiplier $\nu$ corresponding to the normalization $\tr{\rho}=1$. Assume now that $\{\widehat{p}_i\}$ (corresponding to the state $\widehat{\rho}$) attains the maximum of (\ref{expressvariationofinformationwithpi}) over all probability distributions $\{p_i\}$ (due to continuity and compactness, this maximum is attained). We now view (\ref{lagrangefunctionforvariance}) as a function of those variables $p_i$ for which $\widehat{p}_i>0$, fixing the other elements $p_i$ to be zero. Then, due to the extremality of $\{\widehat{p}_i\}$ and having components in the interior of the domain of $L$, the method of Lagrange multipliers guarantees the existence of $\widehat{\nu}\in(-\infty,+\infty)$ such that
\begin{align}\label{onlytwononzerocomponentsinvariance}
\begin{split}
0~&
=~\left.\frac{dL}{dp_j}\right|_{\{\widehat{p}_i\},\widehat{\nu}}~=~(\log\widehat{p}_j)^2+2\log\widehat{p}_j-2\left(\sum_i\widehat{p}_i\log\widehat{p}_i\right)(1+\log\widehat{p}_j)+\widehat{\nu}\\
&=~\left(S(\{\widehat{p}_i\})+1+\log\widehat{p}_j\right)^2\,-\,\left(S(\{\widehat{p}_i\})\right)^2\,+\,\widehat{\nu}-1~~~\qquad\forall j~\,\text{with}~\,\widehat{p}_j>0~,
\end{split}
\end{align}
where the quantity $S(\{\widehat{p}_i\})=S(\widehat{\rho})$ denotes the entropy of the distribution $\{\widehat{p}_i\}$ and in particular does not depend on the index $j$. Thus, the equality (\ref{onlytwononzerocomponentsinvariance}) implies that
\begin{align}
\log\widehat{p}_j~=~\pm\sqrt{\left(S(\widehat{\rho})\right)^2-\widehat{\nu}+1}\,-\,S(\widehat{\rho})-1~~~\qquad\forall j~\,\text{with}~\,\widehat{p}_j>0~,
\end{align}
so that strict monotonicity of the logarithm yields that there can be at most two distinct non-zero elements in $\{\widehat{p}_i\}$.

Thus, leaving off hats again, an optimal $\rho=\widehat{\rho}$ has the form
\begin{align}
{\rho}~=~{\rm diag}\left(\frac{1-{r}}{m},\ldots,\frac{1-{r}}{m},\frac{{r}}{n},\ldots,\frac{{r}}{n},0,\ldots,0\right)
\end{align}
with $m,n\geq1$, $m+n\leq d$, $r\in[0,1]$. W.l.o.g.~we can assume $r\leq1/2$ by permuting the entries of $\rho$. For such states one has, after a small calculation,
\begin{align}\label{maximizerinproof}
\var_\rho(\log\rho)~=~r(1-r)\left(\log\frac{1-r}{r}+
\log\frac{n}{m}\right)^2~.
\end{align}
Maximizing this, for any fixed $r\in[0,1/2]$, over $m$ and $n$ yields $n=d-1$ and $m=1$. Maximizing (\ref{maximizerinproof}) finally over $r$ gives a unique $r=r_d\in(0,1/2)$, namely the unique value of $r\in[0,1/2]$ satisfying $(1-2r)\log\left(\frac{1-r}{r}(d-1)\right)=2$, and the maximum of (\ref{maximizerinproof}) is $N(d)$ from Eq.\ (\ref{definerealN}).

The last inequality in (\ref{firstineqformaxvarianceinfo}) is shown by Lemma \ref{lemmareducingonevariableoptimization}, completing the proof of Theorem \ref{maxvarianceinfothm}.
\end{proof}

\bigskip\medskip{\bf Acknowledgments.} We would like to thank Daniel Reitzner and Marco Tomamichel for helpful discussions. DR was supported by the Marie Curie Intra European Fellowship QUINTYL. MMW acknowledges support from the Alfried Krupp von Bohlen und Halbach-Stiftung.

\bibliographystyle{alpha}

\end{document}